\tikzset{>=latex} 
\tikzstyle{vector}=[->,very thick]
\tikzstyle{mydashed}=[dash pattern=on 2pt off 2pt]
\tikzstyle{startstop} = [rectangle, rounded corners, minimum width=2.5cm, minimum height=1cm, text centered, draw=black, fill=red!30, text width=2cm]
\tikzstyle{process} = [rectangle, minimum width=2.5cm, minimum height=1cm, text centered, draw=black, fill=blue!30, text width=2cm]
\tikzstyle{lastnode} = [rectangle, minimum width=5cm, minimum height=1cm, text centered, draw=black, fill=blue!30]
\tikzstyle{arrow} = [thick,->,>=stealth]
\newtheorem{theorem}{Theorem}[section] 
\newtheorem{lemma}[theorem]{Lemma}
\newtheorem{strategy}[theorem]{Strategy}
\newtheorem{remark}[theorem]{Remark}
\newcommand{\Ocomp}{\mathcal{O}}
\newcommand{\Iset}{\mathcal{I}}
\newcommand{\hsip}[2]{\left\langle #1, #2 \right\rangle}
\newcommand{\rvline}{\hspace*{-\arraycolsep}\vline\hspace*{-\arraycolsep}}
\newcommand{\st}{\mathrm{s.t.}}
\newcommand{\id}{\mathds{1}}
\newcommand{\reals}{\mathbb{R}}
\newcommand{\kb}[1]{ \ket{#1} \!\! \bra{#1}}
\begin{document}

\title{A hybrid quantum-classical algorithm for Bayes-optimal quantum state discrimination using the source code}

\author{
Ankith Mohan\thanks{Department of Computer Science, Virginia Polytechnic Institute and State University, Blacksburg, VA, USA $24061$. \texttt{ankithmo@vt.edu}}
\and
Jamie Sikora\thanks{Department of Computer Science, Virginia Polytechnic Institute and State University, Blacksburg, VA, USA $24061$. \texttt{sikora@vt.edu}}
\and
Sarvagya Upadhyay\thanks{Fujitsu Research of America, Sunnyvale, CA, USA $94085$.
\texttt{supadhyay@fujitsu.com}}
}

\date{\today} 

\maketitle

\begin{abstract}
Quantum state discrimination is a fundamental primitive in quantum information processing, underpinning tasks in quantum communication, sensing, and learning. 
We consider the general Bayes framework, as introduced by Helstrom, for state discrimination when, instead of a classical description of the candidate states, one has access to their \emph{source code}: the quantum circuit that prepares them.
We show that the semidefinite program (SDP) for the discrimination problem can be reformulated in terms of the Gram matrix of these states, reducing the SDP variable dimensions from $dL$ to $NL$, where $d$ is the Hilbert space dimension,
$N$ is the number of candidate states, and $L$ is the number of possible guesses.
Importantly, we further introduce a quantum pre-processing procedure which 
efficiently constructs the reduced semidefinite program from the source code, enabling our method to operate directly on quantum data. 
We consider two applications.
First, we characterize the optimal identifications for quantum changepoint problems under several reward structures, including multiple-changepoint settings that were previously computationally inaccessible.
Second, we consider a quantum error classification problem and show how our reduction makes it tractable for systems of hundreds of qubits.
\end{abstract}

\section{Introduction}\label{sec:intro} 

Quantum state discrimination is a central task in quantum information processing.
Quantum key distribution~\cite{ko2018advanced}, two-party cryptography~\cite{aharonov2000quantum, ambainis2001new, sikora2017simple}, the hidden subgroup problem~\cite{hayashi2008quantum}, and dimension witnessing~\cite{strubi2013measuring}, among many others, can all be formulated in terms of this problem, which can be described as the following two-party game.
Alice chooses a state from a fixed set of states and sends it to Bob who wishes to identify the state. 
Here we assume that both Alice and Bob know the set of states and the a priori probabilities before the game commences.

For Bob to determine his state perfectly, the necessary and sufficient condition is that the states in the set must be pairwise orthogonal.
However this need not be the case and, in such circumstances, Bob may wish to optimize some desired figure of merit, e.g., the average probability of correctly identifying the state.
Several other figures of merit exist depending on the desired application.
For example, Bob may wish to only guess when he knows he will be correct, called unambiguous state discrimination.
Each figure of merit could correspond to adopting a different strategy which would in turn lead to different behaviour.  
We discuss several variants in this work and refer the interested reader to the following reviews~\cite{chefles2000quantum, bergou200411, bergou2007quantum, barnett2009quantum, bergou2010discrimination}. 

\subsection{Quantum state discrimination}

We now discuss the quantum state discrimination problem.
Formally, suppose Alice selects the state $\rho_i$ from the set of states $\{\rho_1, \dots, \rho_N\}$ with corresponding a priori probabilities $\{q_1, \dots, q_N\}$.
For Bob to make a guess, he must construct POVM operators $\{M_1, \dots, M_L\}$, such that a detection event on the $i$-th operator, $M_i$, corresponds to Bob guessing $i$. 
Note that the number of these operators, $L$, depends on the figure of merit adopted. 
By Born's rule, the probability of a detection event on the $i$-th operator upon receiving the state $\rho_j$ is given by
\begin{align}
    \label{eq:Jason_Bourne} 
    \Pr[i | j] = \hsip{M_i}{\rho_j}. 
\end{align} 

We now discuss several variants, each of which correspond to how we want $i$ and $j$ to be related. 

\paragraph{Minimum error discrimination.}
In this variant, Bob wishes to guess which state was sent. 
That is, upon observing a detection outcome on the $i$-th operator, Bob concludes that he received the state $\rho_i$.
Bob's goal is to maximize the correlation between $i$ and $j$ in this case. 
The optimal guessing probability can be computed as the optimal value of the optimization problem below 
\begin{align}
    \label{eq:ME}
    \text{maximize:} \; \left\{ \sum_{i=1}^N q_i \hsip{M_i}{\rho_i} \right\}.
\end{align}

Distinguishing between two states is well-studied and the solution is given by the \emph{Helstrom measurement}~\cite{helstrom1969quantum}.  
When dealing with more than two states, a solution is known for geometrically uniform states \cite{eldar2004optimal}, for mirror-symmetric states~\cite{andersson2002minimum}, and for symmetric states generated by a specific class of unitary matrices~\cite{barnett2001minimum, chiribella2004covariant, chiribella2006maximum}. 
Also, \cite{bae2013structure} proposes a general approach by exploiting the geometry of the states.
However, for most cases, a closed-form solution is unknown and one must resort to numerical solvers~\cite{cvx} to find the optimal measurements.

\paragraph{Unambiguous discrimination.} 
This variant is similar to the minimum error case, except we further restrict Bob to \emph{never make a mistake}. 
In other words, Bob only makes a guess when he is certain about his received state, and in all other cases he returns an inconclusive outcome corresponding to ``I do not know!''. 
This can be achieved through constraining the POVM $\{M_1, \dots, M_N,M_{N+1}\}$ to satisfy $\Pr[i|j] = \hsip{M_i}{\rho_j} = 0$, for all $i \ne j$. 
Here, we have $L = N+1$ with $M_{N+1}$ corresponding to the inconclusive outcome.
The guessing probability in this setting is equal to the optimal value of the following optimization problem 
\begin{align}
    \label{eq:UA}
    \text{maximize:} \;  \left\{ \sum_{i=1}^N q_i \hsip{M_i}{\rho_i}  : \hsip{M_i}{\rho_j} = 0, \text{ for all }\ i \ne j \right\}.
\end{align}

The unambiguous case is often restricted to the cases where the states are either linearly independent for pure states~\cite{chefles1998unambiguous}, or when their supports do not completely overlap for mixed states~\cite{rudolph2003unambiguous}. 
Similar to the minimum error case, discriminating between two states unambiguously has a closed form solution given by Jaeger and Shimony~\cite{jaeger1995optimal}. 

\paragraph{Generalizations of these strategies.} 
From the discussion so far, we can express the probability of correctly identifying the states, the error probability, and the probability of obtaining an inconclusive outcome below 
\begin{align}
    \label{eq:P}
    P_D \coloneqq \sum_{i=1}^N q_i\, \hsip{M_i}{\rho_i}, \quad P_E \coloneqq \sum_{i,j: i \neq j}^{N} q_j\, \hsip{M_i}{\rho_j}, \quad \text{ and } \quad  P_I  \coloneqq \sum_{j=1}^N q_j\, \hsip{M_{N+1}}{\rho_j},
\end{align}
respectively. 
In unambiguous discrimination, the operators must minimize $P_I$, equivalently maximize $P_D$, subject to $P_E = 0$.
In these cases, it may be prudent to relax this constraint by setting $P_E \le \epsilon$ for some $\epsilon \in (0, 1)$.
Introduction of such error is useful in quantum key distribution, for example, where the amount of information that can be learned by an eavesdropper can be decreased through this error~\cite{tamaki2003security}.
Some constraints were established in~\cite{touzel2007optimal} on the optimal solution for the unambiguous case with fixed error probability.
A similar idea for minimum-error discrimination was investigated in~\cite{eldar2003mixed}. 

\paragraph{Quantum state exclusion.} 
Consider the same game where Alice picks a state and sends it to Bob, but instead of Bob identifying what state he received, here the task is to \emph{exclude} a state he did \emph{not} receive.
For instance, if Bob receives the state $\rho_1$ and he replies ``I did not receive $\rho_2$'', this corresponds to a winning scenario.
Here, Bob wishes to construct a POVM $\{M_1, \dots, M_N\}$ that minimizes the detection outcome $M_j$ upon receiving the state $\rho_j$. 
The error probability for minimum-error state exclusion can be computed as the optimal value of the following optimization problem 
\begin{align}
    \label{eq:ME_SE}
    \text{minimize:} \; \left\{ \sum_{i=1}^N q_i \hsip{M_i}{\rho_i} \right\}. 
\end{align}
This problem has been studied in detail in~\cite{bandyopadhyay2014conclusive, uola2020all, ducuara2019operational}.
A set of quantum states is said to be \emph{antidistinguishable} if there exists a POVM that can perfectly exclude the state Bob did not receive, i.e., the error above is $0$. 
Necessary and sufficient conditions when a set is antidistinguishable are derived in~\cite{russo2023inner},  tight bounds are given in~\cite{johnston2023tight}, and optimal error exponents are discussed in~\cite{mishra2023optimal}. 
 
For the unambiguous variant of quantum state exclusion, Bob now wishes to construct a POVM $\{M_1, \dots, M_N, M_{N+1}\}$ such that detection outcome $M_j$ excludes with certainty the state $\rho_j$ and the detection on $M_{N+1}$ leads to a situation where Bob does not exclude any state. 
The goal is to minimize the probability of obtaining the inconclusive outcome, while ensuring that a detection on $M_j$ does not exclude the state $\rho_j$.
The optimal error probability of this scenario is given by the optimal value of the following problem 
\begin{align}
    \label{eq:UA_SE}
    \text{minimize:} \; \left\{ \sum_{j=1}^N q_j \hsip{M_{N+1}}{\rho_j} : \hsip{M_j}{\rho_j} = 0 \text{ for all } j \right\}. 
\end{align}

We now discuss a general framework to study quantum state discrimination problems.

\subsubsection{The general Bayes approach}\label{sec:gen_frame} 

Recall the setting that Bob expects one of the states from the set $\{\rho_1, \dots, \rho_N\}$ with corresponding a priori probabilities $\{q_1, \dots, q_N\}$. 
Helstrom's Bayes approach~\cite{helstrom1969quantum} considers a POVM $\{M_1, \dots, M_{L}\}$ along with the parameters $R_{ij}$, where $R_{ij}$ is Bob's reward\footnote{The reward can also be negative which can be interpreted as a penalty.} if he outputs $i$, given by $M_i$, upon receiving the state $\rho_j$. 
The idea behind the reward is that it generalizes the aforementioned discrimination tasks while opening up the possibilities for new ones. 
The parameter $R_{ij}$ is also referred to as the cost matrix~\cite{helstrom1969quantum, holevo1973statistical, holevo1978asymptotically, brody1996bayesian, helstrom2003bayes, yuen2003optimum, nakahira2012minimum}, loss function~\cite{wieczorek2018entropic} or the payoff function~\cite{watrous2018theory}.

In general, we can assume that Bob wishes to maximize his reward function given as 
\begin{align}
    \label{eq:gen_QSD}
    \text{maximize:} \;  \left\{ \sum_{i=1}^L \sum_{j=1}^N R_{ij} q_j  \hsip{M_i}{\rho_j} \right\}.
\end{align} 

We describe below how some discrimination tasks fit into this framework.

\begin{itemize} 
\item \textbf{Minimum-error state discrimination}: This is the case when $L = N$, and
\begin{align}
    \label{eq:R_ME_SD}
    R_{ij} =
    \begin{cases}
        1 &\text{if} \quad i = j \\
        0 &\text{otherwise}
    \end{cases}.
\end{align}
Notice here that Bob is given a unit reward only when he is able to correctly identify the state he received, and gets no reward when he makes an error.
Applying this reward scheme to Eq.~\eqref{eq:gen_QSD}, we can recover Eq.~\eqref{eq:ME}. 

\item \textbf{Minimum-error state exclusion}: 
This is the case when $L = N$, and
\begin{align}
    \label{eq:R_ME_SE}
    R_{ij} =
    \begin{cases}
        0 &\text{if} \quad i = j \\
        1 &\text{otherwise}
    \end{cases}. 
\end{align}
Substituting this reward scheme in Eq.~\eqref{eq:gen_QSD} recovers Eq.~\eqref{eq:ME_SE}. 

\item \textbf{Unambiguous state discrimination}: 
This is the case when $L = N+1$, with $i = N+1$ indicating an inconclusive measurement, and
\begin{align}
    \label{eq:R_UA_SD}
    R_{ij} =
    \begin{cases}
        1 &\text{if} \quad i = j \\
        -\infty &\text{if} \quad i \ne j \\
        0 & \text{if} \quad i = N+1 
    \end{cases}. 
\end{align} 
Bob is awarded a unit reward when he correctly identifies his received state and is granted a penalty of $-\infty$ for all erroneous situations to ensure that such errors are not committed. 
Also, he receives no reward (or penalty) for the inconclusive outcome $N+1$.  
Using this reward scheme in Eq.~\eqref{eq:gen_QSD} gives us Eq.~\eqref{eq:UA}. 

A variant of this scheme is 
\begin{align}
    \label{eq:R_UA_SD}
    R_{ij} =
    \begin{cases} 
        1 &\text{if} \quad i = j \\
        -\beta &\text{if} \quad i \ne j \\
        0 & \text{if} \quad i = N+1 
    \end{cases} 
\end{align} 
for some $\beta > 0$. 
This allows a penalty for an incorrect guess, but is not completely forbidden. 

\item \textbf{Unambiguous state exclusion}:  
This is the case when 
\begin{align}
    \label{eq:R_UA_SE}
    R_{ij} =
    \begin{cases}
        -\infty &\text{if} \quad i = j \\
        0 &\text{if} \quad i \ne j \\
        -1 & \text{if} \quad i = N+1 
    \end{cases}. 
\end{align} 
Here Bob is not allowed to make any errors, therefore he receives a penalty of $-\infty$ whenever he excludes his received state.
\end{itemize}  

Other reward functions that are relevant to a wide range of quantum problems can be considered through this approach as well.
In many settings, an incorrect guess may still be preferred over other incorrect guesses. An important example of this perspective arises in anomaly detection~\cite{llorens2024quantum} and changepoint detection~\cite{sentis2016quantum, sentis2017exact}. 

\begin{itemize} 

\item \textbf{The horseshoe reward}: This is the same as the minimum error case, but one does not need to be exact to be correct, \emph{just close enough}.  
By setting $\mu$ to be the \emph{closeness parameter}, a non-negative integer, and $L = N$, define  
\begin{align}
    \label{eq:R_ME_horseshoe}
    R_{ij} =
    \begin{cases}
        1 &\text{if} \quad |i - j| \leq \mu \\
        0 &\text{otherwise}
    \end{cases}. 
\end{align} 
This gives a unit reward if one is close enough, and nothing otherwise.  
By setting $\mu = 0$, we recover minimum error discrimination, Eq.~(\ref{eq:ME}). 

\item \textbf{The closer-the-better reward}: This is when one does not need to be exact to be correct, \emph{but closer is better}.  
By setting $\gamma \in [0,1]$ to be some fixed reward, and $L = N$, define  
\begin{align}
    \label{eq:R_ME_ctbr}
    R_{ij} = \gamma^{|i-j|}. 
\end{align} 
This gives a greater reward the closer one is to correctly identifying the state. 
By setting $\gamma = 0$, we recover minimum error discrimination, Eq.~(\ref{eq:ME}). 
By setting $\gamma = 1$, we get the every-guess-is-a-winner reward. 

\item \textbf{The high school exam reward}: This is when one is rewarded partial marks for saying ``I do not know''.  
Here $L = N+1$, and
\begin{align}
    \label{eq:R_ME_school}
    R_{ij} =
    \begin{cases}
        1 &\text{if} \quad i = j \\ 
        0 &\text{if} \quad i \neq j \\
        0.25 & \text{if} \quad i = N+1 
    \end{cases}. 
\end{align} 
This deters from random guessing and awards partial credit to an inconclusive guess.  
This variant is studied in \cite{combes2015cost} which they call the $0-1-\lambda$ reward where $1-\lambda$ is the reward that Bob receives for saying ``I do not know''. They examine how the measurement changes as the value of $\lambda$ is varied.

\end{itemize} 

\subsubsection{Classification and mixed states}

This approach also allows one to capture a large class of discrimination problems known as \emph{classification}.
In this task, each state belongs to a subset of the classes $\{C_1, \dots, C_L\}$, and Bob does not need to identify exactly which state was given, he just needs to identify a class that it belongs to. 
For example, if Bob is given a picture of an animal, a sufficient guess could be ``cat'' as opposed to identifying exactly which cat.

For this task, we can define 
\begin{align}
    \label{eq:class}
    R_{ij} =
    \begin{cases}
        1 &\text{if} \quad \rho_j \in C_i \\ 
        0 &\text{otherwise}
    \end{cases}. 
\end{align}
(Here, $i$ runs over the number of classes.) 
This means that Bob gets a unit reward if he correctly guesses a class that $\rho_j$ belongs to, i.e., he correctly classifies the given state. 

\paragraph{Mixed states.}
While many state discrimination results apply to the pure state case, we now show how this framework can be applied to mixed states as well.
Suppose that the spectral decomposition of the state $\rho_j$ is given by 
\begin{equation}
    \rho_j = \sum_{k_j=1}^{r_j} \lambda_{jk_j} \ketbra{\psi_{jk_j}}{\psi_{jk_j}},
\end{equation}
where $r_j$ is the rank of the state $\rho_j$.
Then the reward function in Eq.~\eqref{eq:gen_QSD} can be expressed as
\begin{align}
    \label{eq:mixed_states}
    \sum_{i=1}^L \sum_{j=1}^N R_{ij} q_j \hsip{M_i}{\rho_j} 
    &= \sum_{i=1}^L \sum_{j=1}^N R_{ij} q_j \hsip{M_i}{\sum_{k_j=1}^{r_j} \lambda_{jk_j} \ketbra{\psi_{jk_j}}} \nonumber \\
    &= \sum_{i=1}^L \sum_{j=1}^N \sum_{k_j=1}^{r_j} R_{ij} q_j \lambda_{jk_j} \hsip{M_i}{\ketbra{\psi_{jk_j}}} \nonumber \\
    &= \sum_{i=1}^L \sum_{j=1}^N \sum_{k_j=1}^{r_j} R_{ij} \tilde{q}_{jk_j} \hsip{M_i}{\ketbra{\psi_{jk_j}}}
\end{align}
where $\tilde{q}_{jk_j} = q_j \lambda_{jk_j}$ denotes the apriori probability of Bob receiving the state $\ket{\psi_{jk_j}}$. 

Note that the number of pure states is no longer $N$, but rather $r$ where $r = \sum_{j=1}^N r_{j}$ is the sum of the ranks. 
Therefore, if all of the states have low rank then we still have a small (pure state) instance of the problem. 
Thus, by defining the rewards $R_{ijk_j} = R_{ij}$, we can model the general setting for mixed states using the pure state version.

As a concrete example, the case of minimum error discrimination of mixed states can be defined using the rewards 
\begin{align}
    \label{eq:min_class}
    R_{ijk_j} =
    \begin{cases}
        1 & \text{if} \quad i = j \\ 
        0 & \text{if} \quad i \neq j 
    \end{cases}. 
\end{align}

\medskip 
\begin{remark}
Although our results are presented for pure states, the preceding discussion illustrates how this framework can handle the mixed state case as well.  
\end{remark}

\begin{remark}
We emphasize that the advantage of our reduction is most pronounced when the states are pure or low-rank. 
When the ranks are large, the effective number of pure states can become large, and the computational savings diminish. 
\end{remark}
 
\section{Technical results}

In this section, we describe our semidefinite programming reduction and the hybrid algorithm for computing the optimal reward.
In Section~\ref{sec:apps} we show applications and numerical simulations.

\subsection{Semidefinite programming formulations and reductions} 

Given a set of rewards $R_{ij}$, we can model the optimization of the optimal reward as the optimal objective function value of the following semidefinite program.
\begin{align}
    \label{eq:gen_QSD2}
    \alpha = \text{maximize:} \;  \left\{ \sum_{i=1}^L \sum_{j=1}^N R_{ij} q_j  \hsip{M_i}{\ketbra{\psi_j}} \right\}   
\end{align} 
We now investigate how hard it is to compute $\alpha$ and to find optimal POVMs.  

We remark that the quantity given in (\ref{eq:gen_QSD2}) is the optimization of a linear function over affine constraints $\sum_{i=1}^L M_i = \id$ and each variable is positive semidefinite. 
Thus, it can be written as a semidefinite program (SDP) (see Appendix~\ref{sec:bg}).
However, this computation involves finding $L$ operators each of size $d \times d$ which is the size of each of the states. 
Thus, if the states involve many qubits, this is an intractable problem. 

To reduce the size of the SDP problem, we investigate the following equivalent SDP, below  
\begin{align} 
    \alpha' & = \text{maximize:} \;
        \left\{ 
            \sum_{i=1}^L \sum_{j=1}^N R_{ij} q_j \bra{j} W_i \ket{j} : 
            \sum_{i=1}^L W_i = G,\ W_i \succeq 0 
        \right\} \label{reducedprimal}
\end{align}
where $G$ is the Gram matrix of the states $\{\ket{\psi_i}\}$, i.e., 
\begin{equation} 
G = \sum_{i,j=1}^N \langle \psi_i | \psi_j \rangle \ketbra{i}{j}. 
\end{equation} 
We show the equivalence between the two SDPs in Section~\ref{sec:SDPs}. 
This SDP has the advantage of (typically) being much smaller in size compared to the one given by~(\ref{eq:gen_QSD2}) and in general much easier to solve. 
Indeed, they involve $L$ SDP variables each of size $N$ now. 
Their utility is given by the following theorem. 

\begin{theorem} \label{thm:reduction} 
Given $N$ fixed pure states $\{ \ket{\psi_1}, \ldots, \ket{\psi_N} \}$ and a priori probabilities $q_1, \ldots, q_N$, the general Bayes approach given in \eqref{eq:gen_QSD2} can be calculated by \eqref{reducedprimal}. 
More precisely, $\alpha = \alpha'$.
\end{theorem}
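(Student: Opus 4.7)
The plan is to prove both inequalities $\alpha \le \alpha'$ and $\alpha' \le \alpha$ by exhibiting explicit transformations between feasible solutions of the two SDPs that preserve the objective. The central algebraic object is the matrix $V = \sum_{j=1}^N \ket{\psi_j}\bra{j}$, a $d \times N$ operator whose columns are the states. With this definition, $\ket{\psi_j} = V \ket{j}$ and the Gram matrix factors as $G = V^* V$, so the two SDPs are really the same problem viewed at different ``resolutions.''

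For the easy direction $\alpha \le \alpha'$, given any feasible POVM $\{M_1, \ldots, M_L\}$ for \eqref{eq:gen_QSD2}, I would define $W_i \coloneqq V^* M_i V$. Each $W_i$ is positive semidefinite because $M_i$ is, and $\sum_i W_i = V^* \id V = V^* V = G$, so the $\{W_i\}$ are feasible for \eqref{reducedprimal}. The objective is preserved term-by-term because $\langle j | W_i | j \rangle = \bra{\psi_j} M_i \ket{\psi_j} = \hsip{M_i}{\ketbra{\psi_j}}$.

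The harder direction $\alpha' \le \alpha$ requires lifting $W_i$ on $\complex^N$ back to an operator on the full $d$-dimensional Hilbert space. My plan is to use the Moore--Penrose pseudoinverse $V^+$, letting $P \coloneqq V V^+$ be the orthogonal projector onto $\operatorname{range}(V) = \operatorname{span}\{\ket{\psi_j}\}$, and then defining $M_1 \coloneqq (V^+)^* W_1 V^+ + (\id - P)$ and $M_i \coloneqq (V^+)^* W_i V^+$ for $i \ge 2$. Each $M_i \succeq 0$, and summing gives $(V^+)^* G V^+ + (\id - P) = (V V^+)^*(V V^+) + (\id - P) = P + (\id - P) = \id$, so $\{M_i\}$ is a genuine POVM. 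For the objective, the slack term $(\id - P)$ annihilates every $\ket{\psi_j}$, and I would compute $\bra{\psi_j} (V^+)^* W_i V^+ \ket{\psi_j} = \bra{j} (V^+ V)^* W_i (V^+ V) \ket{j}$.

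The main obstacle is precisely the last step when the states are linearly dependent: then $V^+ V \neq \id_N$, and a priori $(V^+ V)^* W_i (V^+ V) \neq W_i$. The resolution, which I would state as the key lemma, is a support argument: from $W_i \succeq 0$ and $0 \preceq W_i \preceq \sum_k W_k = G = V^*V$, one gets that $\operatorname{null}(V) \subseteq \operatorname{null}(W_i)$, so $W_i$ is supported on $\operatorname{range}(V^*)$, which is exactly the range of the projector $V^+ V$. Therefore $(V^+ V) W_i (V^+ V) = W_i$, and each objective term matches. Combining both directions yields $\alpha = \alpha'$.
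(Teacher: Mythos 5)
Your proposal is correct and follows essentially the same route as the paper's proof: the same isometry-like map $V=\sum_j \ket{\psi_j}\bra{j}$ (the paper's $\Psi$), the same compression $W_i = V^* M_i V$ for one direction, and the same Moore--Penrose lifting with the support argument $0 \preceq W_i \preceq G = V^*V$ for the other. The only cosmetic difference is that you place the slack $\id - VV^+$ entirely on $M_1$ while the paper splits it as $\tfrac{1}{L}(\id - \Psi\Psi^+)$ across all $L$ operators; both choices work.
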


While restricting to the support subspace is well known\footnote{The minimal reducing subspace technique~\cite[Proposition 1]{holevo1982testing} is an example of such an approach.}, this reduction is the explicit reformulation as an SDP over $N \times N$ matrices parameterized solely by the Gram matrix $G$. 
This reformulation:
(i) decouples the SDP from any Hilbert-space representation, 
(ii) enables direct construction from experimentally accessible inner products via the hybrid algorithm of Strategy 1.3, and 
(iii) yields a concrete computational pipeline that can scale with the number of states in the problem, which is illustrated in Section~\ref{sec:apps}.

\paragraph{Related works that focus on Gram matrices.}
Several works exploit the fact that the discrimination properties for a set of states are captured by the Gram matrix \emph{for fixed applications}. 
For instance,~\cite{sentis2016quantum, sentis2017exact, sentis2018online} use this to prove bounds on the success probability for the changepoint problem,~\cite{llorens2024quantum} draws on this for the multi-anomaly detection problem, and \cite{skotiniotis2024identification} utilizes this for error position identification. 
In~\cite{dalla2015optimality}, the authors characterize the optimal success probability of discriminating a set of linearly independent pure states in the minimum error regime as a function of the Gram matrix.
For discriminating between linearly independent pure states with an intrinsic ordering,~\cite{martinez2019certified} obtains a bound for the success probability when using a combination of the horseshoe reward (Eq.~\eqref{eq:R_ME_horseshoe}) and unambiguous discrimination. 
In \cite{johnston2023tight, johnston2025complexityperfectquantumstate}, analyzing the task of \emph{$k$-learnability} makes significant use of a reduction of the problem to properties of the Gram matrix. 
These works demonstrate that not only is the reduction to a Gram matrix attractive from a computational perspective, but also from an analytical perspective. 
Our reduction in this work takes the above approaches and unifies them for general, application-agnostic state discrimination tasks. 
Perhaps surprisingly, this has not been yet been considered and we suspect this will be helpful for many other future applications.

\subsection{A hybrid algorithm for calculating the optimal reward}

As mentioned above, if we had access to the inner products, we would be able to solve for the optimal reward function values (for any reward of our choosing).
However, computing inner products of exponentially large vectors is expensive. 
Therefore, we now discuss means of doing this on a quantum computer. 
After all, the quantities involve a property of physical quantum states. 

Since the states $\{\ket{\psi_i}\}$ are prepared and sent to Bob, we make the assumption that they are efficiently preparable. 
Given this, Bob can perform some pre-computations on them before the discrimination game starts. 
For instance, Bob can perform the Hadamard test to learn Re($\braket{\psi_i}{\psi_j}$) and/or Im($\braket{\psi_i}{\psi_j}$) for all values of $i$ and $j$ (see Section~\ref{sec:QC_bg}), assuming he has access to a unitary $U_{ij}$ that can map the state $\ket{\psi_i}$ to $\ket{\psi_j}$ for all values of $i$ and $j$. 

We also note that if Bob is given access to unitaries that prepare each of the states $\ket{\psi_i}$, then a block-encoding of the Gram matrix can be efficiently implemented (see Lemma 47 of \cite{gilyen2019quantum}). 

For completeness, we mention that in the case where it is hard to implement the controlled unitary required for the Hadamard test, if the unitary $U$ can be decomposed into either the
(1) sum of Pauli products with a number of terms that is polynomial in the number of qubits $n$, or
(2) tensor product of unitaries $U_q$ where each $U_q$ acts on at most $\Ocomp(\text{poly}(\log n))$ qubits,
then the term $\braket{\psi_i}{\psi_j}$ can be estimated using a direct measurement method~\cite{mitarai2019methodology}.
Moreover, if each of the states $\{\ket{\psi_i}\}$ can be prepared as $P_i \ket{\psi}$ using an initial state $\ket{\psi}$ and Pauli string $P_i \in \{\id, X, Y, Z\}^{\otimes t}$ for some $t$, then note that $\braket{\psi_i}{\psi_j} = \bra{\psi} P_i P_j \ket{\psi} = a \bra{\psi} P' \ket{\psi}$ where $P'$ is some Pauli string and $a \in \{+1, -1, +i, -i\}$.
Then calculating the inner product amounts to computing the expectation value of the Pauli string $P'$~\cite{bharti2021iterative}.

In the special case where the inner products are all non-negative, we can use the standard swap test to compute these values. 
Given two states $\ket{\psi}$ and $\ket{\phi}$, the swap test has two outcomes, one occurring with probability $\frac{1}{2} + \frac{1}{2} | \braket{\psi}{\phi} |^2$ and the other occurring with probability $\frac{1}{2} - \frac{1}{2} | \braket{\psi}{\phi} |^2$ (we discuss this more in Appendix~\ref{sec:QC_bg}).
  
Figure~\ref{fig:flow_chart} depicts this hybrid algorithm.
\begin{figure}[ht]
    \centering
    \begin{tikzpicture}[node distance=3cm]
        \node (U) [startstop] {Preparation unitaries};
        \node (ip) [process, right of=U] {Compute inner products};
        \node (G) [startstop, right of=ip] {Gram matrix $G$};
        \node (SDP) [process, below of=ip] {Reduced SDP};
        \node (R) [startstop, right of=SDP] {Reward matrix $R$};
        \node (q) [startstop, left of=SDP] {A priori probability vector $q$};
        \node (p) [lastnode, below of=SDP] {Optimal reward};

        \begin{scope}[on background layer]
            \node[fill=blue!10, draw=blue!30,
                  fit=(U)(ip)(G), label={[blue!70, font=\bfseries]above:Quantum}] {};
            \node[fill=red!10, draw=red!30,
                  fit=(q)(SDP)(R), label={[red!70!black, font=\bfseries]above:Classical}] {};
        \end{scope}
        
        \draw [arrow] (U) -- (ip);
        \draw [arrow] (ip) -- (G);
        \draw [arrow] (G) -- (SDP);
        \draw [arrow] (R) -- (SDP);
        \draw [arrow] (q) -- (SDP);
        \draw [arrow] (SDP) -- (p);
    \end{tikzpicture}
    \caption{A visual representation of the hybrid algorithm.}
    \label{fig:flow_chart}
\end{figure}

\begin{strategy} 
Suppose we are given $N$ pure states $\{ \ket{\psi_1}, \ldots, \ket{\psi_N} \}$ and a priori probabilities $\{q_1, \ldots, q_N\}$, 
and suppose further that we are given efficient preparation unitaries that prepare each of the states. 
Then we can approximate the optimal reward function using a hybrid quantum-classical algorithm via the Hadamard test. 
If the quantum states are not known but the inner products are promised to be non-negative, the same holds only given access to preparation devices via the swap test.
In each case, the classical part of the algorithm involves solving SDPs with $L$ variables each of size $N \times N$.
\end{strategy}

We remark that in this work we do not examine estimation errors (say, in computing the inner products) and how sensitive the SDPs are to these errors. 
As far as we are aware, general sensitivity analyses are not known for SDP algorithms. 
As such, we leave this as an interesting open problem and rely simply on using machine-accurate estimations of such parameters.

\subsubsection{Tests in quantum computing}\label{sec:QC_bg} 

\paragraph{Swap test.}

Given two (possibly unknown) pure states $\ket{\psi}$ and $\ket{\phi}$, the swap test outputs a Bernoulli random variable that is $0$ with probability $\frac{1}{2} + \frac{1}{2}|\braket{\psi}{\phi}|^2$.
Figure~\ref{fig:inner_product_circuit} (a) depicts the corresponding circuit.
One can learn $|\braket{\psi}{\phi}|^2$ to additive accuracy $\epsilon$ with failure probability at most $\delta$ using $\Ocomp \left(\frac{1}{\epsilon^2} \log \left( \frac{1}{\delta} \right) \right)$ copies of the states and $\tilde{\Ocomp} \left(\frac{1}{\epsilon^2} \log \left( \frac{1}{\delta} \right) \right)$ operations~\cite{buhrman2001quantum,huang2019near}.

\paragraph{Hadamard test.}

Given access to a unitary $U$ that maps the state $\ket{\psi}$ to the state $\ket{\phi}$, this method creates a random variable whose expected value is the real part of $\braket{\psi}{\phi}$.
Figure~\ref{fig:inner_product_circuit} (b) illustrates the circuit for this procedure.
The imaginary part of $\braket{\psi}{\phi}$ can be computed by applying a phase gate $S$ on the first qubit before the controlled unitary operation.
We can learn $|\braket{\psi}{\phi}|^2$ to additive accuracy $\epsilon$ with failure probability at most $\delta$ using $\Ocomp \left(\frac{1}{\epsilon^2} \log \left( \frac{1}{\delta} \right) \right)$ copies of the states and $\tilde{\Ocomp} \left(\frac{1}{\epsilon^2} \log \left( \frac{1}{\delta} \right) \right)$ operations~\cite{aharonov2006polynomial,huang2019near}.

\bigskip 

\begin{figure}[ht]
    \centering
    \begin{subfigure}[b]{0.495\textwidth}
        \centering
        \begin{quantikz}
            \lstick{$\ket{0}$} & \gate{H} & \ctrl{2} & \gate{H} & \meter{} & \cw \\
            \lstick{$\ket{\psi}$} & \qw & \swap{1} & \qw & \qw & \qw \\
            \lstick{$\ket{\phi}$} & \qw & \swap{} & \qw & \qw & \qw
        \end{quantikz}    
        \caption{Circuit for swap test.}
    \end{subfigure}
    \hfill
    \begin{subfigure}[b]{0.495\textwidth}
        \centering
        \begin{quantikz}
            \lstick{$\ket{0}$} & \gate{H} & \ctrl{1} & \gate{S^b} & \gate{H} & \meter{} & \cw \\
            \lstick{$\ket{\psi}$} & \qw & \gate{U} & \qw & \qw & \qw & \qw \\
        \end{quantikz}    
        \caption{Circuit for Hadamard test.}
    \end{subfigure}
    \caption{
    Circuits to learn the inner product between two (perhaps unknown) states $\ket{\psi}$ and $\ket{\phi}$.
    The swap test can approximate $|\braket{\psi}{\phi}|$ from many measured samples. 
    When $b = 0$, the Hadamard test approximates from many samples $\Re(\braket{\psi}{\phi})$, and when $b = 1$, $\Im(\braket{\psi}{\phi})$ can be approximated instead.
    The unitary $U$ maps the state $\ket{\psi}$ to the state $\ket{\phi}$. 
    }
    \label{fig:inner_product_circuit} 
\end{figure}

\section{Applications}\label{sec:apps}

We now discuss two problems that can be formulated as instances of the general Bayes approach and then apply our reduced SDPs followed by a discussion of numerical performance.

\subsection{The quantum changepoint identification problem.} 

Suppose that Alice sends to Bob a sequence of states $\ket{\psi}^{\otimes c_1} \otimes \ket{\phi_1}^{\otimes (c_2 - c_1)} \otimes \ket{\phi_2}^{\otimes (N - c_2)}$, one state at a time. 
We can think of $\ket{\psi}$ as the original state and $\ket{\phi_1}$ and $\ket{\phi_2}$ as mutated states. 
Perhaps Alice bought a cheap QKD device which started breaking down at time steps $c_1$ and $(c_2 - c_1)$. 
In this problem, Bob wishes to guess the index of the time steps, see Figure~\ref{fig:QCP} for an illustration. 
This is a quantum version of a \emph{changepoint problem} which is a widely studied field in statistical analysis with diverse applications, see the survey~\cite{aminikhanghahi2017survey}. 

\begin{figure}[h]
    \centering
    \includegraphics[width=\textwidth]{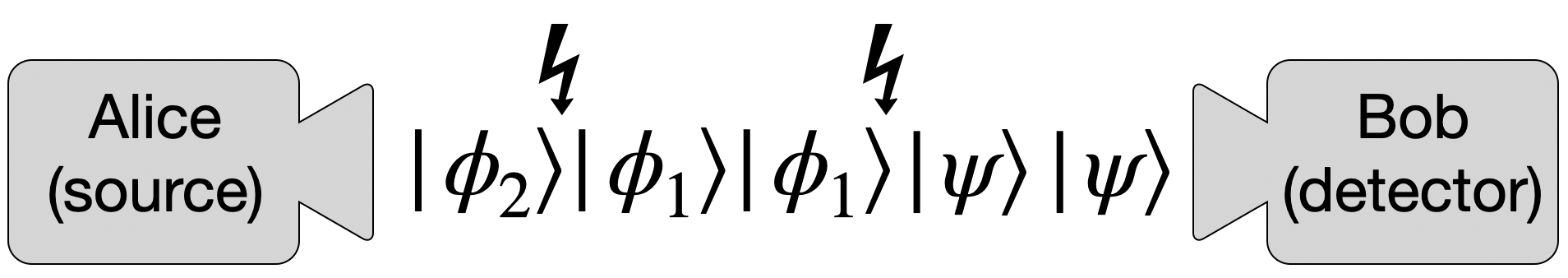}
    \caption{
    A quantum source is sending a stream of $\ket{\psi}$ states but two mutations occurred in this example. 
    The task is to design an optimal detector to guess when the mutations occurred. 
    }
    \label{fig:QCP}
\end{figure} 

Here we assume that both the promised state as well as the mutated ones are known to Bob. 
(We also discuss the case when they are unknown but we have a source of them to perform some pre-computations.) 
Observe that this problem can be formulated as an instance of the state discrimination problem. 
Consider the set of states consisting of every possible sequence. 
Then Bob's task is to construct a POVM that can identify the changepoints, where each possible guess  corresponds to a particular state in our set. 

The changepoint problem \emph{with one mutation} has been introduced and studied in~\cite{sentis2016quantum} for the minimum error case and in~\cite{sentis2017exact} in the unambiguous case. 
In both works, they present the optimal success probabilities and analyze restricted measurements given by online algorithms. 
Achieving the bound in~\cite{sentis2016quantum} requires a measurement performed jointly on all states in the sequence. 
In this paper, we make the same assumption, that we have the availability of quantum memory capable of storing each of these particles sent by Alice, thereby enabling such joint measurements.


\subsubsection{Application 1: A single changepoint with varying reward functions}  


\paragraph{One changepoint, with horseshoe reward.}   

In this example, we consider the horseshoe reward with varying closeness parameters $\mu$. 
Recall the reward function is given as 
\begin{align}
    \label{eq:blasting}
    R_{ij} =
    \begin{cases}
        1 &\text{if} \quad |i-j| \le \mu \\
        0 &\text{otherwise}
    \end{cases}.    
\end{align}

Figure~\ref{fig:beta'_beta''_pi_4_1CP_mu} plots the optimal reward value for a single changepoint example. 


\begin{figure}[h]
    \centering
    \includegraphics[scale=0.75]{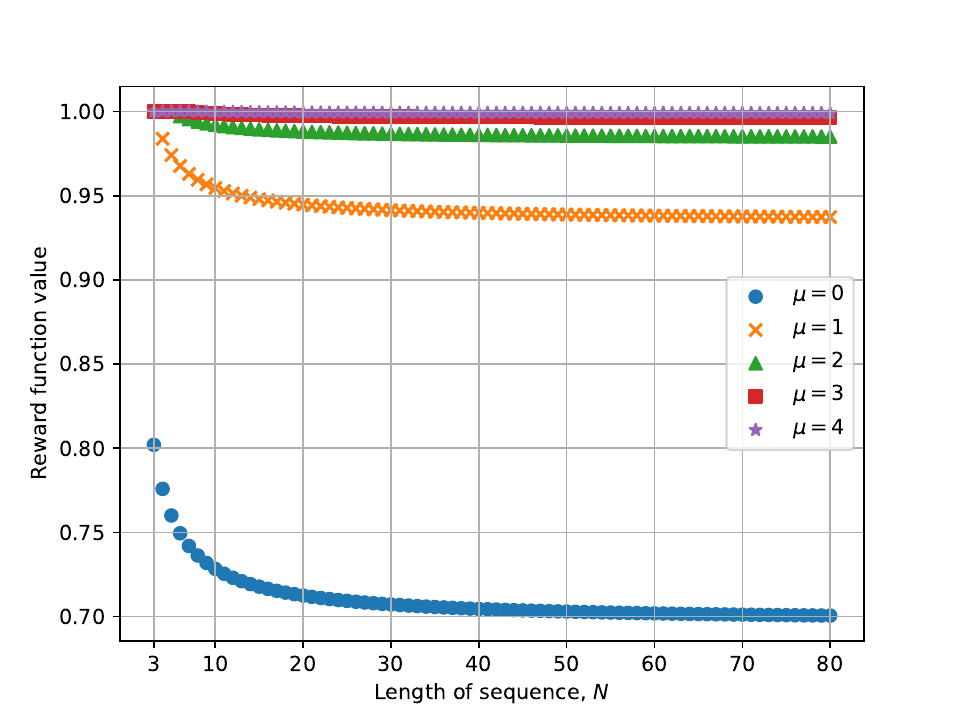}
    \caption{
    We look at a sequence of $N$ states starting at $\ket{0}$ and possibly switching to $\ket{+}$ at some point.
    We look at the optimal reward function value (which corresponds to the optimal probability of identifying the location of the changepoint) as a function of $N$ for varying values of closeness parameter $\mu$. 
    We see that the reward stabilizes for large sequence lengths and having large values of $\mu$ greatly improves the reward, as expected.
    }
    \label{fig:beta'_beta''_pi_4_1CP_mu}
\end{figure} 

We note that in Figure~\ref{fig:beta'_beta''_pi_4_1CP_mu}, we were easily able to compute the reduced SDPs for this problem up to $80$ qubits. 
Using the original SDP, this would be completely intractable as the variables would each be of size $2^{80} \times 2^{80}$.


\paragraph{One changepoint, with the closer-the-better reward.}

In this example, we consider the closer-the-better reward with varying  parameters $\gamma$. 
Recall the reward function is given as 
\begin{align*}
    R_{ij} = \gamma^{|i-j|}.
\end{align*}

An example reward function is illustrated in Figure~\ref{fig:overall_1CP} (a) while Figure~\ref{fig:overall_1CP} (b) plots the optimal reward value for three particular examples.

\begin{figure}[h]
    \centering
    \begin{subfigure}[b]{0.495\textwidth}
        \centering
        \includegraphics[width=\textwidth]{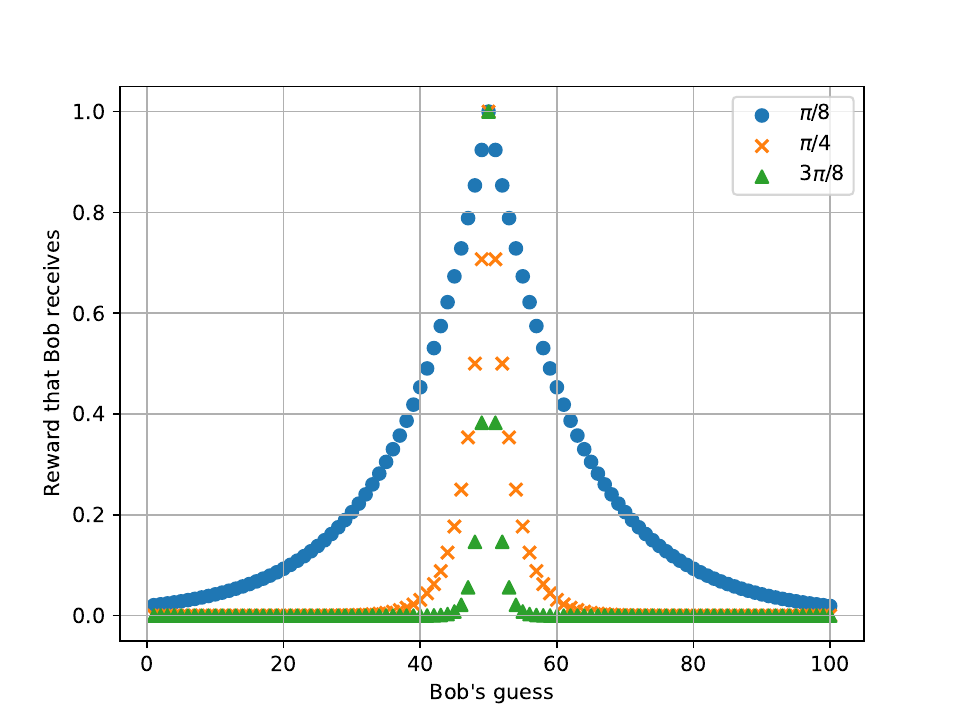}
        \caption{Reward that Bob receives given the sequence $\ket{\psi}^{\otimes 50} \otimes \ket{\phi}^{\otimes 50}$ for the closer-the-better reward scheme.}
    \end{subfigure}
    \hfill
    \begin{subfigure}[b]{0.495\textwidth}
        \centering
        \includegraphics[width=\textwidth]{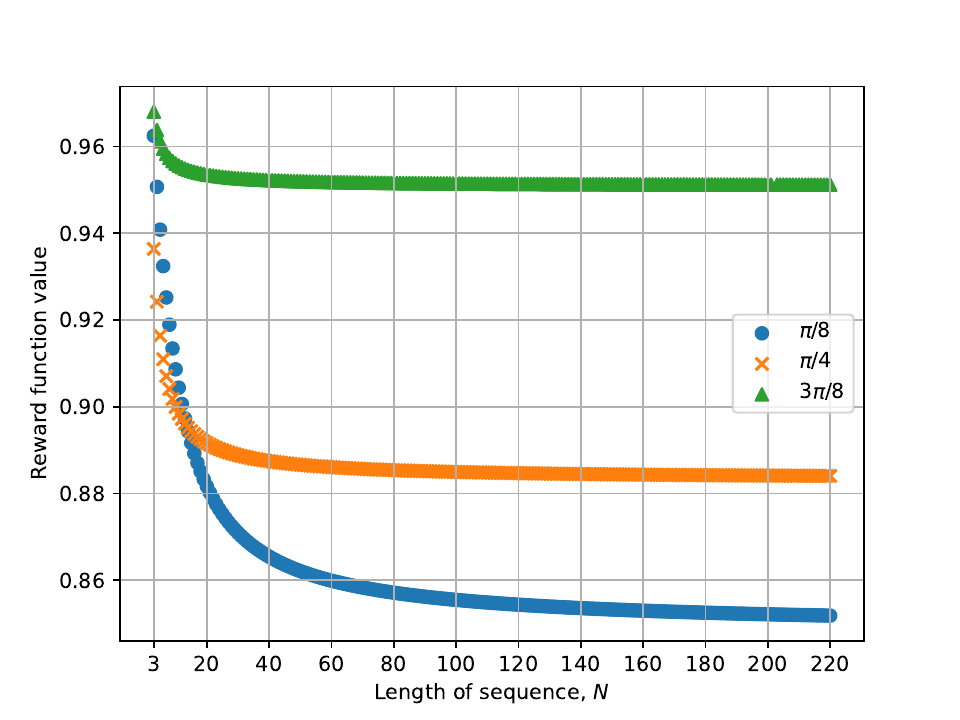}
        \caption{Optimal reward function value as a function of $N$ for varying values of $\theta$.}
    \end{subfigure}
    \caption{
    We consider here the closer-the-better reward scheme with $\gamma = |\braket{\psi}{\phi}| = \cos(\theta)$ for {$\theta \in \{ \pi/8, \pi/4, 3\pi/8 \}$} when $\ket{\psi} = \ket{0}$ and $\ket{\phi} = \cos(\theta) \ket{0} + \sin(\theta) \ket{1}$.
    The figure on the left describes the reward that Bob receives when given the sequence $\ket{\psi}^{\otimes 50} \otimes  \ket{\phi}^{\otimes 50}$ for this reward scheme, while the figure on the right depicts the optimal reward function as a function of $N$, the number of states sent.
    Note here that the closer the states are to each other, the harder it is to locate the changepoint.
    However, we scaled the rewards such that we have larger rewards in the more difficult settings.
    }
    \label{fig:overall_1CP}
\end{figure}

In this setting, we are able to easily compute the reward function for states up to $220$ qubits.

\subsubsection{Heuristic: Speeding up computations for large changepoint sequences}  

To speed-up the calculation of the optimal reward values for large sequences, we introduce a heuristic. 
Roughly speaking, we add the constraint in the dual of the reduced problem that $X$ is Hermitian Toeplitz (see Section~\ref{sec:bg} for a discussion of Toeplitz matrices). 
The dual is given as $\beta'$ below and $\beta''$ is our heuristic.
\begin{align}
    \beta' & = \text{minimize:} 
        \left\{ 
            \hsip{X}{G} : 
            X \succcurlyeq \sum_{j=1}^N R_{ij} q_j \ketbra{j}, \ i \in [L]
        \right\} \label{reduceddual}
\end{align}
\begin{equation}
    \beta'' = \text{minimize:} 
                \left\{ 
                    \hsip{X}{G} : 
                    X \succcurlyeq \sum_{j=1}^N R_{ij} q_j \ketbra{j}, \ i \in [L], 
                    X \text{ Hermitian Toeplitz}
                \right\}. \label{reduceddual2}
\end{equation}

By strong duality (see Appendix~\ref{sec:bg}), the primal (Eq.~\eqref{reducedprimal}) and the dual (Eq.~\eqref{reduceddual}) have the same optimal value. 

Note that we justify this heuristic in the specific application of the quantum changepoint problem with one changepoint; there is no reason to believe it works well in general discrimination problems. 

Why this heuristic? Well, for one it reduces the number of free parameters in an $N \times N$ matrix to just $N$. 
But, this means nothing if it gives us bad approximations. 
Indeed, it turns out that for large $N$, this heuristic turns out to both be a good approximation to the actual value as well as seeing an advantage in computation time. 
In Figure~\ref{fig:beta'_beta''_pi_4_1CP_intro} we consider the case where Alice promises to send Bob $N$ copies of the state $\ket{0}$ but a mutation in her device led to the generation of the state $\ket{+}$ instead.
The reduced SDP (Eq.~\eqref{reduceddual}) and the heuristic (Eq.~\eqref{reduceddual2}) are considered.
We compare the absolute difference of their values as well as their runtimes respectively.
For a sequence of length $220$, the error between the reduced SDP (Eq.~\eqref{reduceddual}) and the heuristic (Eq.~\eqref{reduceddual2}) is of the order of $10^{-3}$ and computing the heuristic (Eq.~\eqref{reduceddual2}) is observed to be about \emph{seven} times faster than the reduced SDP (Eq.~\eqref{reduceddual}).\footnote{In Figure~\ref{fig:beta'_beta''_pi_4_1CP_intro}, the subscript $1$ in $\beta_1'$ and $\beta_1''$ denote that we are considering one changepoint. We consider $2$ and $3$ changepoints in the appendix.}. 

Appendix~\ref{sec:mQCP} describes the general case and provides a comparison with the results of~\cite{sentis2016quantum, sentis2017exact}. 

\begin{figure}[h]
    \centering
    \begin{subfigure}[b]{0.495\textwidth}
        \centering
        \includegraphics[width=\textwidth]{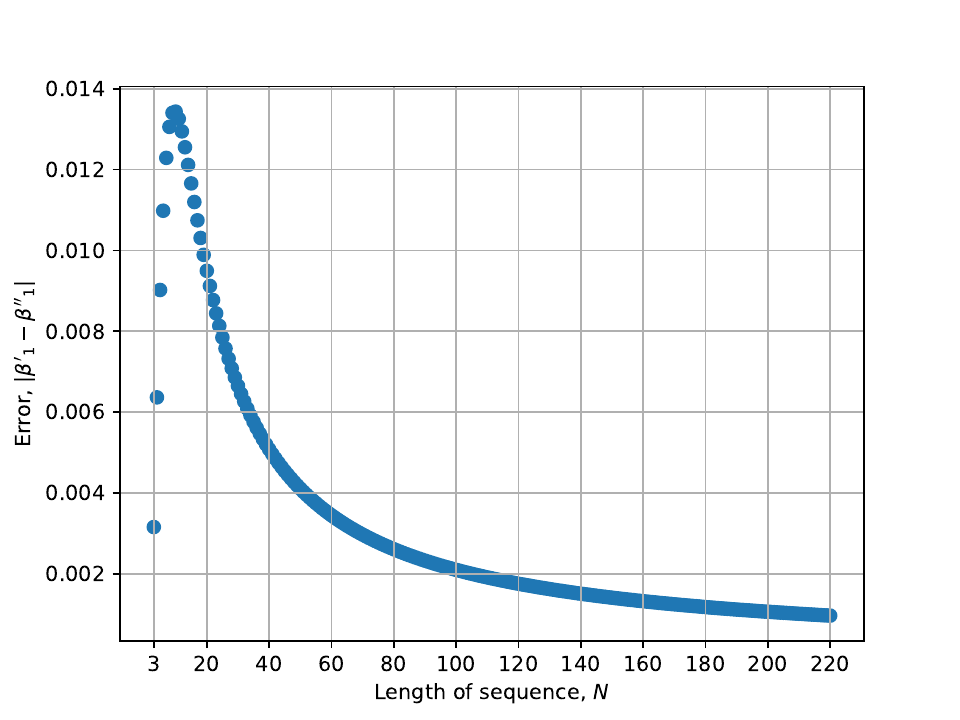}
        \caption{Absolute value of the difference between Eq.~\eqref{reduceddual} and Eq.~\eqref{reduceddual2}.}
    \end{subfigure}
    \hfill
    \begin{subfigure}[b]{0.495\textwidth}
        \centering
        \includegraphics[width=\textwidth]{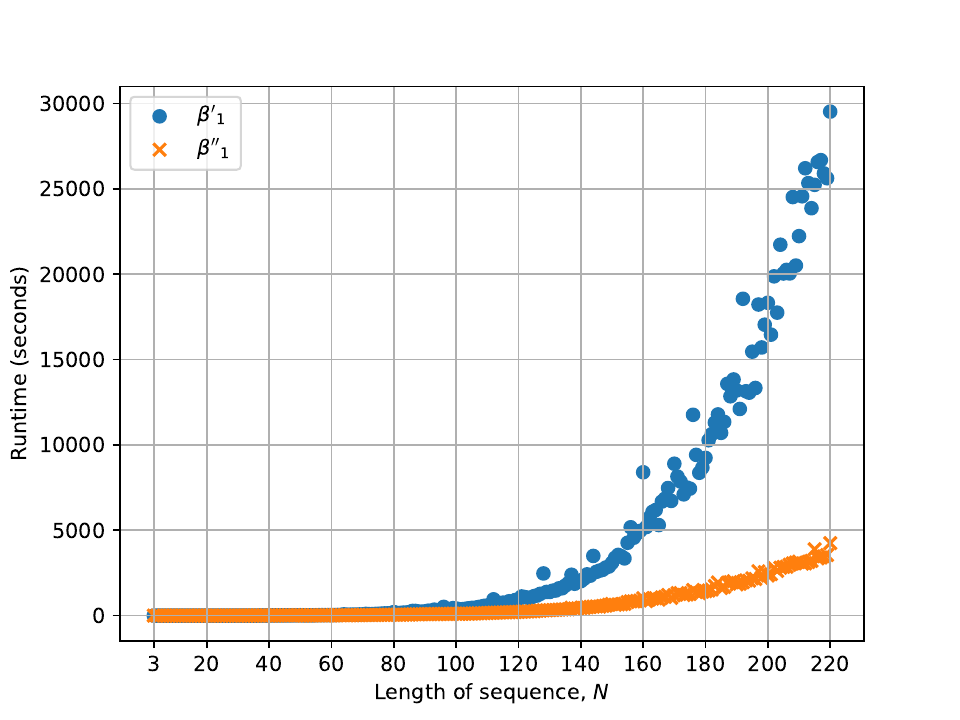}
        \caption{Runtimes of Eq.~\eqref{reduceddual} and Eq.~\eqref{reduceddual2}.}
    \end{subfigure}
    \caption{
    Here we consider the sequences where the state $\ket{0}$ mutated to the state $\ket{+}$. 
    Using the closer-the-better reward scheme with $\gamma = 1/2$, we see the error between the heuristic (Eq.~\eqref{reduceddual2}) and the reduced SDP (Eq.~\eqref{reduceddual}) (on the left), and the runtimes of both Eq.~\eqref{reduceddual} and Eq.~\eqref{reduceddual2} (on the right).
    Observe that the difference between Eq.~\eqref{reduceddual} and Eq.~\eqref{reduceddual2} decreases rapidly approaching $10^{-3}$ at $N = 220$ (on the left).
    Although there seems to be no advantage, in terms of time, in computing Eq.~\eqref{reduceddual2} for sequences up to length $80$, we observe that computing Eq.~\eqref{reduceddual2} is roughly \emph{seven} times faster than calculating Eq.~\eqref{reduceddual} for larger values of $N$.
    }
    \label{fig:beta'_beta''_pi_4_1CP_intro}
\end{figure}


\subsubsection{Application 2: Multiple changepoint problem}  

We now consider the scenario where Alice promises $N$ copies of the state $\ket{0}$, but there are a total of three changepoints; starting from $\ket{0}$ and eventually possibly changing to $\ket{+}$, to $\ket{1}$, and to $-\ket{-}$, in that order.
Figure~\ref{fig:beta'_beta''_pi_4_3CP} (a) and Figure~\ref{fig:beta'_beta''_pi_4_3CP} (b) compare the error and their runtimes of the reduced dual (Eq.~\eqref{reduceddual}) and the heuristic (Eq.~\eqref{eq:T_3CP}) respectively.
These examples and more are described in further detail in Appendix~\ref{sec:expts}.

\subsection{Quantum error type classification}

Suppose now that Alice sends Bob an $n$-qubit state $\ket{\phi}$ via a single-qubit Pauli error channel.
We assume that any of the $n$ qubits can be affected by one of these errors with equal probability.
Bob's task is to simply find out which of these errors occurred, i.e., $I, X, Z$, or $ZX$ (which equals $Y$ up to global phase).

\begin{figure}
    \centering
    \includegraphics[width=\textwidth]{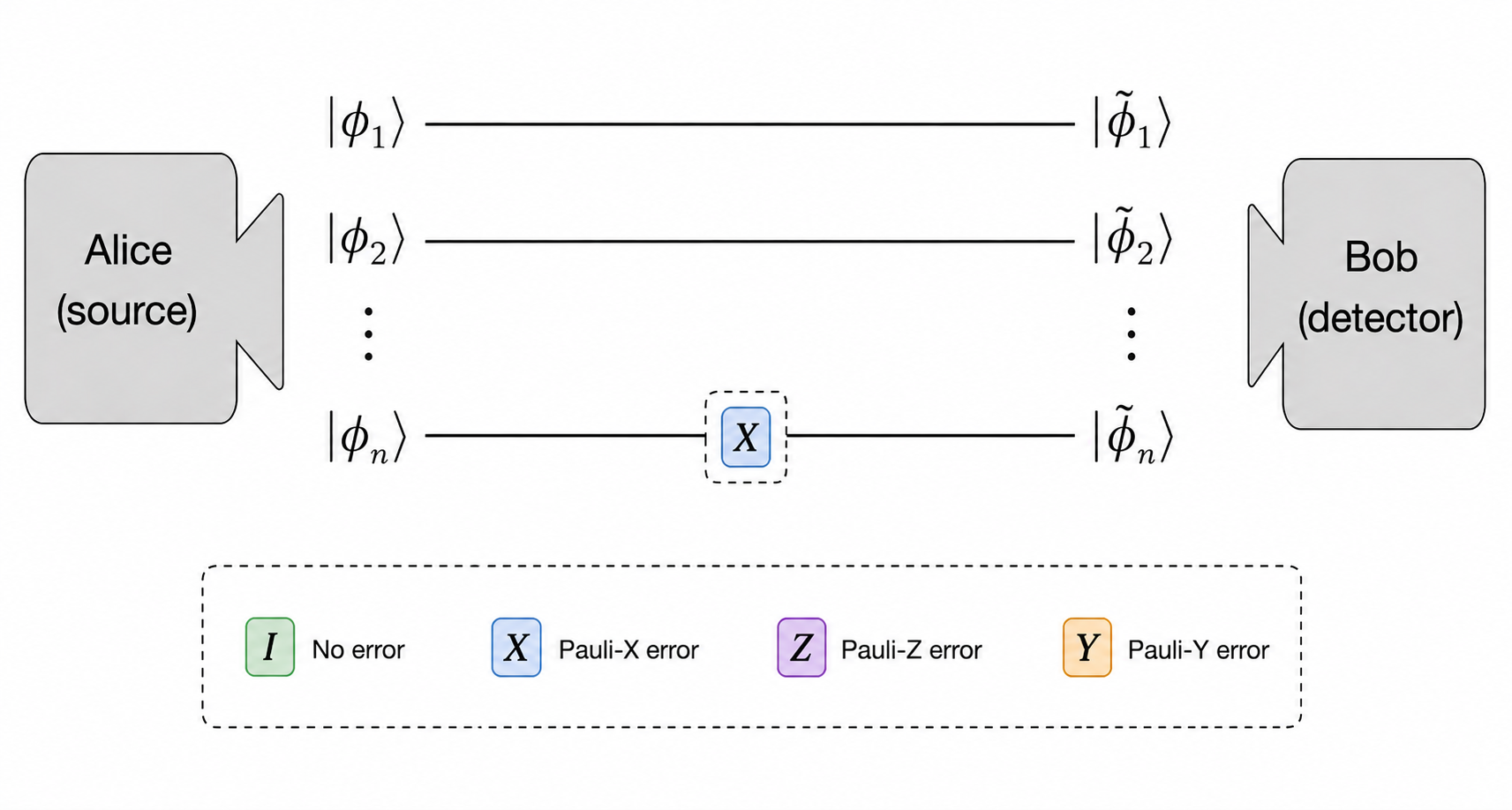}
    \caption{A quantum source is sending an $n$-qubit state $\ket{\phi}$ via a single-qubit Pauli error channel. The task is to classify the type of error ($I, X, Z, Y$) that affected the state (while not asking for \emph{which qubit} it acted on).}
    \label{fig:QETC}
\end{figure}

For the $n$-qubit state, all of the $N = 3n+1$ possibilities are 
\begin{equation}
    \mathcal{S} \coloneqq \{ \ket{\phi}, X_1 \ket{\phi}, \ldots, X_n \ket{\phi}, Z_1 \ket{\phi}, \ldots, Z_n \ket{\phi}, Z_1 X_1 \ket{\phi}, \ldots, Z_n X_n \ket{\phi} \}.    
\end{equation}
Bob can formulate this as a problem of the form described in Eq.~\eqref{eq:mixed_states} with the reward given in Eq.~\eqref{eq:class}, as follows.
Define the set $\mathcal{I} = \{ I, X, Z, ZX \}$ and let $C_i$ for $i \in \mathcal{I}$ denote that Pauli-$i$ error occurred.
Let $R_{ij} = 1$ if $\ket{\psi_j} \in \mathcal{S}$ is correctly classified into $C_i$, and $0$ otherwise.
The a priori probability of the states are $q_1 = 1/4$ and $1/4n$ for each of the other states in $\mathcal{S}$.
Thus, the optimal classification probability can be formulated as the following SDP
\begin{equation}
    \max \; \left\{ \sum_{i \in \mathcal{I}} \sum_{j=1}^N R_{ij} q_j \hsip{M_i}{\ketbra{\psi_j}}\ : \sum_{i \in \mathcal{I}} M_i = I,\ M_i \succcurlyeq 0 \right\}.
\end{equation}

For our numerical experiments, we consider the state
\begin{equation}
    \label{eq:qetc}
    \ket{\psi_n(\theta)} = \cos(\theta) \ket{\mathrm{GHZ}_n} + \sin(\theta) \ket{\mathrm{W}_n},
\end{equation}
where
\begin{equation*}
    \ket{\mathrm{GHZ}_n} = \frac{\ket{0}^{\otimes n} + \ket{1}^{\otimes n}}{\sqrt{2}} \quad \text{and} \quad \ket{\mathrm{W}_n} = \frac{1}{\sqrt{n}}(\ket{100 \cdots 0} + \ket{010 \cdots 0} + \ket{001 \cdots 0} + \cdots + \ket{000 \cdots 1}) 
\end{equation*} 
for $n \geq 2$.

For $n=2$, the Bell states are $\ket{\phi^\pm} = (\ket{00} \pm \ket{11})/\sqrt{2}$ and $\ket{\psi^\pm} = (\ket{01} \pm \ket{10})/\sqrt{2}$. 
We get $\ket{\mathrm{GHZ}_2} = \ket{\phi^+}$ and $\ket{\mathrm{W}_2} = \ket{\psi^+}$.
Let us define the state $\ket{\chi(\theta)^\pm} = \cos(\theta) \ket{\phi^\pm} + \sin(\theta) \ket{\psi^\pm}$, $\Phi^\pm = \ketbra{\phi^\pm}$, $\Psi^\pm = \ketbra{\psi^\pm}$, and let $\bar{\theta} = \pi/2 - \theta$.

The state $\ket{\psi_2(\theta)} = \ket{\chi(\theta)^+}, X_1\ket{\psi_2(\theta)} = X_2\ket{\psi_2(\theta)} = \ket{\chi(\bar{\theta})^+}$.
We have $Z_1\ket{\psi_2(\theta)} = \ket{\chi(\theta)^-}$, $Z_2\ket{\psi_2(\theta)} = \ket{\chi(-\theta)^-}$, $Z_1 X_1 \ket{\psi_2(\theta)} = \ket{\chi(\bar{\theta})^-}$ and $Z_2 X_2 \ket{\psi_2(\theta)} = \ket{\chi(-\bar{\theta})^-}$.

Define
\begin{equation*}
    \begin{aligned}
        \rho_I &\coloneqq \ketbra{\chi(\theta)^+} = \cos^2(\theta)\ \Phi^+ + \sin^2(\theta)\ \Psi^+ + \frac{\sin(\theta)\cos(\theta)}{2}\ (\id \otimes X + X \otimes \id), \\
        \rho_X &\coloneqq \ketbra{\chi(\bar\theta)^+} = \sin^2(\theta)\ \Phi^+ + \cos^2(\theta)\ \Psi^+ + \frac{\sin(\theta)\cos(\theta)}{2} (\id \otimes X + X \otimes \id), \\
        \rho_Z &\coloneqq \frac{1}{2} \ketbra{\chi(\theta)^-} + \frac{1}{2} \ketbra{\chi(-\theta)^-} = \cos^2(\theta)\ \Phi^- + \sin^2(\theta)\ \Psi^-, \\
        \rho_{ZX} &\coloneqq \frac{1}{2} \ketbra{\chi(\bar{\theta})^-} + \frac{1}{2} \ketbra{\chi(-\bar{\theta})^-} = \sin^2(\theta)\ \Phi^- + \cos^2(\theta)\ \Psi^-.
    \end{aligned}
\end{equation*}

The problem can be expressed as
\begin{equation}
    \label{eq:n_2_form}
    \begin{aligned}
        P_{\mathrm{succ}} = \max \left\{ \frac{1}{4} \hsip{M_I}{\rho_I} + \frac{1}{4} \hsip{M_X}{\rho_X} + \frac{1}{4} \hsip{M_Z}{\rho_Z} + \frac{1}{4} \hsip{M_{ZX}}{\rho_{ZX}}\ \right\}.
    \end{aligned}
\end{equation}

Note that $\rho_I, \rho_X \in \mathrm{span}(\Phi^+, \Psi^+)$ and $\rho_Z,\rho_{ZX} \in \mathrm{span}(\Phi^-, \Psi^-)$.
Thus, we can write Eq.~\eqref{eq:n_2_form} as $P_{\mathrm{succ}} = P_{\mathrm{succ}}^+ + P_{\mathrm{succ}}^-$, where 
\begin{equation*}
    \label{eq:n_2_div}
    \begin{aligned}
        P_{\mathrm{succ}}^+ &= \max \left\{ \frac{1}{4} \hsip{M_I}{\rho_I} + \frac{1}{4} \hsip{M_X}{\rho_X}\ :\ M_I + M_X = \Pi_+,\ M_I, M_X \succcurlyeq 0 \right\} \\
        P_{\mathrm{succ}}^- &= \max \left\{ \frac{1}{4} \hsip{M_Z}{\rho_Z} + \frac{1}{4} \hsip{M_{ZX}}{\rho_{ZX}}\ :\ M_Z + M_{ZX} = \Pi_-,\ M_Z, M_{ZX} \succcurlyeq 0 \right\}
    \end{aligned}
\end{equation*}
with $\Pi_\pm = \Phi^\pm + \Psi^\pm$.
It is straightforward to see that $\Pi_+ + \Pi_- = \id$.

Consider $P_{\mathrm{succ}}^+$.
Let us set $M_X = \Pi_+ - M_I$.
Then
\begin{equation*}
    P_{\mathrm{succ}}^+ = \frac{1}{4} \hsip{M_I}{\rho_I - \rho_X} + \frac{1}{4} \hsip{\Pi_+}{\rho_X} = \frac{1}{4} + \frac{\cos(2 \theta)}{4} \hsip{M_I}{\Phi^+ - \Psi^+}.
\end{equation*}
Similarly, by setting $M_{ZX} = \Pi_- - M_Z$, we get
\begin{equation*}
    P_{\mathrm{succ}}^- = \frac{1}{4} + \frac{\cos(2 \theta)}{4} \hsip{M_Z}{\Phi^- - \Psi^-}.
\end{equation*}
Thus, Eq.~\eqref{eq:n_2_form} becomes
\begin{equation*}
    P_{\mathrm{succ}} = \frac{1}{2} + \frac{\cos(2 \theta)}{4} \left( \hsip{M_I}{\Phi^+ - \Psi^+} + \hsip{M_Z}{\Phi^- - \Psi^-} \right)
\end{equation*}

For $\theta < \pi/4$, we have $\cos(2 \theta) > 0$ and so by choosing $M_I = \Phi^+$ and $M_Z = \Phi^-$ we get $P_\mathrm{succ} = (1 + \cos(2 \theta))/2 = \cos^2(\theta)$.
Similarly, for $\theta > \pi/4$, we have $\cos(2 \theta) < 0$ and so by choosing $M_I = \Psi^+$ and $M_Z = \Psi^-$ we get $P_\mathrm{succ} = (1 - \cos(2 \theta))/2 = \sin^2(\theta)$.
At $\theta = \pi/4$, we have $\cos(2 \theta) = 0$ which gives us $P_\mathrm{succ} = 1/2$.
Combining these, we get $P_\mathrm{succ} = \max\{ \cos^2(\theta), \sin^2(\theta) \}$.

While finding an analytical expression for the SDP value for $n=2$ is not too hard, this is much harder for larger values of $n$ and thus we turn to computing these numerically. 
Figure~\ref{fig:qetc} illustrates the optimal error type classification probability as a function of the angle $\theta$ of the state in Eq.~\eqref{eq:qetc}. 

To further examine the classification probabilities at $\theta = 49.8^\circ$, where there is seemingly an inflection point, we calculated the success probability for larger values of $n$ at this value of $\theta$. 
We observe that the success probability keeps shifting very slightly as the value of $n$ increases, with a probability of $0.906102$ at $n = 300$ which took $249891$ seconds (roughly $2.89$ days) of computation time (refer to the computational platform section for hardware details). 
Using the original SDP, this would correspond to dealing with variables of size $2^{300} \times 2^{300}$ which is computationally infeasible. 

\begin{remark} 
Note that this classical simulation was feasible since the state overlaps had an exploitable structure. 
However, for generic states and error unitaries, one would likely need to take advantage of a quantum computer to compute the Gram matrix. 
\end{remark} 

\begin{figure}
    \centering
    \includegraphics[scale=0.5]{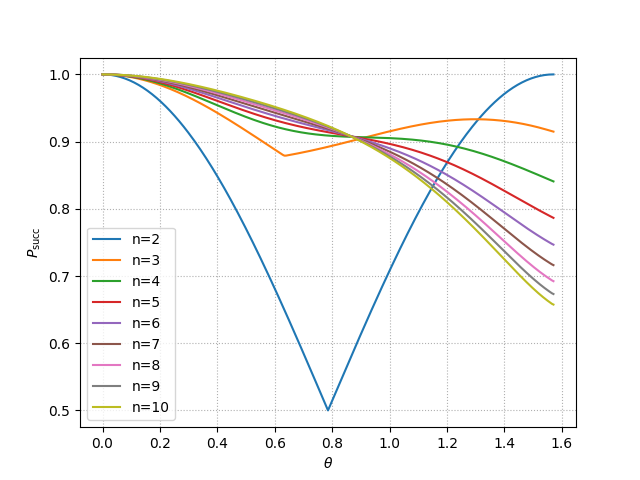}
    \caption{
    The optimal error type classification probability of ${\ket{\psi_n(\theta)} = \cos(\theta) \ket{\mathrm{GHZ}_n} + \sin(\theta) \ket{\mathrm{W}_n}}$ as a function of $\theta$ as it varies from $0$ to $\pi/2$.
    For $n=2$, the success probability is $\max\{\cos^2(\theta), \sin^2(\theta)\}$. 
    For $n \geq 3$, we observe that there seems to be an inflection point around $\theta = 49.8^\circ$ corresponding to a success probability of $0.907$.
    }
    \label{fig:qetc}
\end{figure}

\section{Reducing the size of the reward SDPs}\label{sec:SDPs}

The optimal reward value of the general Bayes approach, given in Eq.~\eqref{eq:gen_QSD2} can be formally described as the primal-dual pair, below 
\begin{equation}
\begin{minipage}{.45\textwidth}
    \begin{equation*}
        \begin{aligned}
            \alpha = \max \; & \sum_{i=1}^L \sum_{j=1}^N R_{ij} q_j \hsip{M_i}{\rho_j} \\
            \st \; & \sum_{i=1}^L M_i = \id \\
            & M_i \succeq 0, \, i \in [L],
        \end{aligned}
    \end{equation*}    
\end{minipage} 
\begin{minipage}{.45\textwidth} 
    \begin{equation*}
        \begin{aligned}
            \beta = \min \; & \Tr(Y) \\
            \st \; & Y \succcurlyeq \sum_{j=1}^N R_{ij} q_j\, \rho_j, \, i \in [L]. \\
            & \phantom{N} \\ 
            & \phantom{N} 
        \end{aligned}
    \end{equation*}
\end{minipage}
\label{eq:alpha_beta}
\end{equation} 
where we use $[m] := \{ 1, \ldots, m \}$ for brevity. 
Note that $\alpha = \beta$ from strong duality when $R_{ij}$ is finite.

We consider the case when the states are pure, i.e., $\{\ket{\psi_1}, \dots, \ket{\psi_N}\}$, with corresponding a priori probabilities $\{q_1, \dots, q_N\}$. 
This gives us the SDPs 
\begin{equation}
\begin{minipage}{.45\textwidth}
    \begin{equation*}
        \begin{aligned}
            \alpha = \max \; & \sum_{i=1}^L \sum_{j=1}^N R_{ij} q_j \hsip{M_i}{\kb{\psi_j}} \\
            \st \; & \sum_{i=1}^L M_i = \id \\
            & M_i \succeq 0, \, i \in [L], 
        \end{aligned}
    \end{equation*}    
\end{minipage} 
\begin{minipage}{.45\textwidth} 
    \begin{equation*}
        \label{prog:P_D}
        \begin{aligned}
            \beta = \min \; & \Tr(Y) \\
            \st \; & Y \succcurlyeq \sum_{j=1}^N R_{ij} q_j \kb{\psi_j}, \, i \in [L]. \\
            & \phantom{N}
        \end{aligned}
    \end{equation*}
\end{minipage}
\end{equation} 

We now suppose a particular structure on the variables in the primal above. 
We show that there is no loss of generality in considering this structure. 
To this end, define the following matrix 
\begin{equation}
    \Psi \coloneqq \sum\limits_{k=1}^N \ketbra{\psi_k}{k} 
\end{equation} 
and note that $\Psi^*\Psi = G$, the Gram matrix of the set of states $\{ \ket{\psi_1}, \ldots, \ket{\psi_N} \}$. 
For a primal feasible solution $\{M_1, \dots, M_L\}$, define $W_i := \Psi^* M_i \Psi$ for $i \in \{1, \dots, L\}$, since $M_i \succeq 0$ for all $i \in [L]$ and $\sum_{i=1}^L M_i = \id$, we have that 
\begin{equation} 
    W_i \succeq 0, \text{ for all } i \in [L], \text{ and } \sum_{i=1}^L W_i = G. 
\end{equation}
Also, we have $\ket{\psi_j} = \Psi \ket{j}$, so we have
\begin{equation}\label{eq:M_W}
    \hsip{M_i}{\kb{\psi_j}} 
    = \bra{\psi_j} {M_i} \ket{\psi_j} 
    = \bra{j} \Psi^* {M_i} \Psi \ket{j} 
    = \bra{j} W_i \ket{j}. 
\end{equation}
Thus, under this assumption, we have the following primal-dual pair of SDPs
\begin{equation}
\begin{minipage}{.4\textwidth}
    \begin{equation*}
        \begin{aligned}
            \alpha' = \max \; & \sum_{i=1}^L \sum_{j=1}^N R_{ij} q_j \bra{j} W_i \ket{j} \\
            \st \; & \sum_{i=1}^L W_i = G \\
            & \hspace{20pt} W_i \succeq 0,\ i \in [L]. \\ 
        \end{aligned}
    \end{equation*}
\end{minipage}
~
\begin{minipage}{.4\textwidth}
    \begin{equation*}
        \label{prog:P'_D'}
        \begin{aligned}
            \beta' = \min \quad & \hsip{X}{G} \\
            \st \quad & X \succcurlyeq \sum_{j=1}^N R_{ij} q_j \ketbra{j},\ i \in [L]. \\
            & \quad \\ 
            & \quad \\ 
        \end{aligned}
    \end{equation*}    
\end{minipage}
\end{equation} 

Since we can take a feasible $(M_1, \ldots, M_L)$ and construct feasible $(W_1, \ldots, W_L)$ with the same objective function value, we have that $\alpha \leq \alpha'$.
By strong duality, one can check that $\alpha' = \beta'$.
All that remains to be proved is that $\alpha \geq \alpha'$.

Suppose $\{W_1, \dots, W_L\}$ is a feasible solution of the primal SDP in Eq.~\eqref{eq:M_W}.
If $\Psi = \sum_{i=1}^r \sigma_i \ketbra{\mu_i}{\nu_i}$ is the singular value decomposition of $\Psi$ where $r$ denotes its rank, then $\Psi^+ = \sum_{i=1}^r \sigma_i^{-1} \ketbra{\nu_i}{\mu_i}$ is the Moore-Penrose generalized inverse of $\Psi$.
Define $M_i = (\Psi^+)^*\ W_i\ (\Psi^+) + \frac{1}{L} \left( \id - \Psi \Psi^+ \right)$ for each $i \in [L]$.
Since $W_i \succeq 0$ and $\Psi \Psi^+ = \sum_{i=1}^r \ketbra{\mu_i}$ is a projection, we have that $M_i \succeq 0$ for each $i \in [L]$.
If we sum over all the elements $\{M_1, \dots, M_L\}$, we get
\begin{align}
    \sum_{i=1}^L M_i 
    &= \sum_{i=1}^L \left( (\Psi^+)^*\ W_i\ (\Psi^+) + \frac{1}{L} \left( \id - \Psi \Psi^+ \right) \right) \nonumber \\
    &= (\Psi^+)^*\ \left( \sum_{i=1}^L W_i \right) \Psi^+ + \left( \sum_{i=1}^L \frac{1}{L} \right) \left( \id - \Psi \Psi^+ \right) = \id.
\end{align}
The last equality is obtained from noting that $\sum_{i=1}^L W_i = G = \Psi^* \Psi$.
For each value of $i$, we have 
\begin{align}
    \bra{\psi_j} M_i \ket{\psi_j}
    = \bra{j} \Psi^* M_i \Psi \ket{j}
    = \bra{j} \Psi^* \left( (\Psi^+)^*\ W_i\ (\Psi^+) + \frac{1}{L} \left( \id - \Psi \Psi^+ \right) \right) \Psi \ket{j}
    = \bra{j} W_i \ket{j}.
\end{align} 
The last equality holds because $\sum_{i=1}^L W_i = G = \Psi^* \Psi$, implying that
\begin{align}
    \text{Range}(W_i) \subseteq \text{Range}(G) = \text{Range}(\Psi^*)    
\end{align}
and $\Psi^+ \Psi$ is the projection onto the range of $\Psi^*$ (and thus leaves $W_i$ unchanged).
Therefore, we get $\alpha \ge \alpha'$.

\medskip 
Note that this proof is similar in structure to a reduction in~\cite{johnston2023tight}.
In that work, the focus was on quantum state exclusion.
The proof above is the generalization to the general Bayes approach.
   
\section{Conclusion and future work} 
In this work we studied Holevo's Bayes approach for quantum state discrimination and provided a new SDP reduction which is fully characterized by the Gram matrix of the states to be guessed.  
We also discussed how the Gram matrix can be computed given the source code (preparation circuits) making the dimension-reduction pre-computation phase efficiently implementable on a quantum computer.
We applied this procedure to the quantum changepoint identification problem under various reward settings and also a quantum error classification task where in each we were able to simulate our procedure for hundreds of qubits. 

Potential future directions include running this procedure on real quantum hardware where the Gram matrix of the outputs of quantum circuits cannot be computed so easily. 
Moreover, since this framework is so general, many other quantum state discrimination problems can also now be considered, especially those with nicely-structured Gram matrices.  

\section*{Computational platform}

For the quantum changepoint problems, the SDPs were solved using CVX~\cite{cvx} on a $32$ GB $10$th generation Intel Core i$9$-$10885$H CPU ($16$ MB cache, $2.40$ GHz, $8$ cores).
For the quantum error type identification problem, the SDPs were solved using CVXPY~\cite{diamond2016cvxpy, agrawal2018rewriting} on a $32$ GB AMD Ryzen $9$ $3950$X $(32)$ @ $3.50$ GHz CPU with NVIDIA GeForce RTX $2080$ Ti Rev. A GPU.
 
\section*{Acknowledgement}

This work was performed when A.M. was an intern at Fujitsu Research of America. 
J.S. acknowledges support from a Fujitsu Research of America research grant and through the NSF Award 2542721.  

\bibliographystyle{alpha}
\bibliography{apssamp} 

\appendix

\section{Background}\label{sec:bg}

In this section, we introduce some background material relevant to this work.

\subsection{Mathematical background} 
       
The \emph{Gram matrix} $G$ of a set of vectors $\{v_1, \dots, v_N\}$ is a Hermitian positive semidefinite matrix whose entries are given by $G_{ij} = \hsip{v_i}{v_j}$.
We use the notation $A \succeq B$ meaning that $A - B$ is positive semidefinite and $A \succ B$ meaning that $A - B$ is positive definite. 

\bigskip

An $N \times N$ \emph{Toeplitz matrix} $T$ is a matrix with entries satisfying  
\begin{equation}
    T_{i, j} = T_{i + 1, j + 1} \text{ for all } i, j = \{1, \dots, N-1\}.
\end{equation}

This Toeplitz matrix can be written as
\begin{align*}
    T = 
        \begin{pmatrix}
            t_0 & t_{-1} & t_{-2} & \cdots & t_{-(N - 1)} \\
            t_1 & t_0 & t_{-1} & \cdots & t_{-(N - 2)} \\
            t_2 & t_1 & t_0 & \cdots & t_{-(N - 3)} \\
            \vdots & \vdots & \vdots & \ddots & \vdots \\
            t_{N - 1} & t_{N - 2} & t_{N - 3} & \cdots & t_0
        \end{pmatrix},
\end{align*}
noting the constant superdiagonals.
Since each diagonal of the Toeplitz matrix has the same value, we can alternately express $T$ as
\begin{align}
    \label{eq:Toep_mat}
    T = \sum_{k = -(N-1)}^{N-1} t_k \Theta_{k},
\end{align}
where $\Theta_k$ is an $N \times N$ Toeplitz matrix with ones on the $k$-th diagonal and zeros elsewhere.
Here, $k = 0$ indicates the principal diagonal, $k > 0$ denotes the sub-diagonals, and $k < 0$ corresponds to the super-diagonals.
Note that a Toeplitz matrix is not necessarily square. 

\emph{Semidefinite programming} is an area of convex optimization where the goal is to optimize a linear function of a positive semidefinite matrix $X$ over affine constraints. 
A semidefinite program (abbreviated as SDP) can be written in standard form
    \begin{equation*}    
        \begin{aligned}
            \alpha = \min \quad & \hsip{C}{X} \\
            \st \quad & \hsip{A_i}{X} = b_i,\ i \in \{1, \dots, m\} \\
            & X \succcurlyeq 0,
        \end{aligned}
    \end{equation*}
where the matrices $C$ and $A_i$ are Hermitian and $b_i$ are real. 
Every SDP has a dual which is an SDP itself and is defined as 
    \begin{equation}
        \label{eq:power_rangers_SPD}
        \begin{aligned}
            \beta = \max \quad & \hsip{b}{y} \\
            \st \quad & \sum_{i=1}^m y_i A_i \preccurlyeq C \\
            & y \in \mathbb{R}^m.  
        \end{aligned}
    \end{equation}     
We have that $\alpha \leq \beta$, a fact known as weak duality. 
If both $\alpha$ and $\beta$ are finite and one has a feasible solution where the inequalities are strict (known as strict feasibility), then $\alpha = \beta$. 
This condition is known as strong duality. 


\section{Application: The quantum changepoint identification problem}\label{sec:mQCP}  

We now consider a general case of the quantum changepoint problem.
Suppose again that Alice promised to deliver $N$ copies of the state $\ket{\psi}$ to Bob.
At a point $c_1$, Alice starts to generate a mutated state $\ket{\phi_1}$.
Then at some point $c_2$, Alice begins to generate another mutated state $\ket{\phi_2}$.
Let us assume that there are $P$ such points in total leading to the states $\{\ket{\phi_3}, \dots, \ket{\phi_P}\}$ at points $\{c_3, \dots, c_P\}$ respectively.
Therefore, Bob must now consider the states
\begin{align}
    \label{eq:CP_P}
    \ket{\tau_{c_1, \cdots, c_P}} = \ket{\psi}^{\otimes c_1}\ \otimes\ \ket{\phi_1}^{\otimes (c_2 - c_1)}\ \otimes\ \cdots\ \otimes\ \ket{\phi_{P-1}}^{\otimes (c_P - c_{P-1})}\ \otimes\ \ket{\phi_P}^{\otimes (N - c_P)}.
\end{align}

The primal-dual pair in Eq.~(\ref{prog:P_D}) can be expressed as 
\begin{equation}
\begin{minipage}{.4\textwidth}
    \begin{equation*}
        \begin{aligned}
            \alpha = \max \quad & \sum_{g\ \in\ \Iset_L}\ \sum_{c\ \in\ \Iset_N} R_{g, c}\ q_c\ \bra{\tau_c} M_g \ket{\tau_c} \\
            \st \quad & \sum_{g\ \in\ \Iset_L} M_g = \id \\
            & \hspace{35pt} M_g \succcurlyeq 0,\ g \in \Iset_L, 
        \end{aligned}
    \end{equation*}    
\end{minipage} 
\begin{minipage}{.55\textwidth} 
    \begin{equation*}
        \label{prog:D_mQCP}
        \begin{aligned}
            \beta = \min \quad & \Tr(Y) \\
            \st \quad & Y \succcurlyeq \sum_{c\ \in\ \Iset_N} R_{g, c}\ q_c\ \ketbra{\tau_c},\ g \in \Iset_L, \\
            & \phantom{N}
        \end{aligned}
    \end{equation*}
\end{minipage}
\end{equation} 
where
\begin{equation} 
\Iset_m := \{(a_1, \cdots, a_P)\ :\ 1 \le a_1 \le \cdots \le a_P \le m\ \text{ where }\ a_i < a_{i+1}\ \text{ unless }\ a_i = m\}.
\end{equation} 
This indexing means that no state $\ket{\phi_i}$ is skipped, unless $N$ states total have already been sent. In other words, you cannot skip from $\ket{\phi_i}$ to $\ket{\phi_{i+2}}$, say, but if the $N$-th state is $\ket{\psi_i}$, then you will not see any copies of $\ket{\psi_j}$ for $j > i$.

The corresponding reduced primal-dual pair becomes
\begin{equation}
\begin{minipage}{.4\textwidth}
    \begin{equation*}
        \begin{aligned}
            \alpha' = \max \quad & \sum_{g\ \in\ \Iset_L}\ \sum_{c\ \in\ \Iset_N} R_{g, c}\ q_c\ \bra{c} W_g \ket{c} \\
            \st \quad & \sum_{g\ \in\ \Iset_L} W_g = G \\
            & \hspace{35pt} W_g \succcurlyeq 0,\ g \in \Iset_L.
        \end{aligned}
    \end{equation*}    
\end{minipage} 
\begin{minipage}{.55\textwidth} 
    \begin{equation*}
        \label{prog:D'_mQCP}
        \begin{aligned}
            \beta' = \min \quad & \hsip{X}{G} \\
            \st \quad & X \succcurlyeq \sum_{c\ \in\ \Iset_N} R_{g, c}\ q_c\ \ketbra{c},\ g \in \Iset_L.\\
            & \phantom{N}
        \end{aligned}
    \end{equation*}
\end{minipage}
\end{equation}

Let us denote the overlaps as follows
\begin{equation}
    \label{eq:inner_prod}
    \begin{aligned}
        |\braket{\psi}{\phi_i}| = \gamma_{i},\ \quad |\braket{\phi_i}{\phi_j}| = |\braket{\phi_j}{\phi_i}| = \gamma_{ij},
        \quad i \ne j \text{ and } i,\ j \in \{1, \dots, P\}.
    \end{aligned}
\end{equation}

Each of these overlaps can be computed using the swap test (see Section~\ref{sec:QC_bg}).

\subsection{At most one changepoint}\label{sec:1CP}

When there is at most one changepoint, Bob discriminates between the following states
\begin{equation}
    \begin{aligned}
        \ket{\tau_1} &\coloneqq \ket{\tau_{1NN\cdots NN}} = \ket{\psi}\ \otimes\ \ket{\phi_1}^{\otimes N-1} \\
        \ket{\tau_2} &\coloneqq \ket{\tau_{2NN\cdots NN}} = \ket{\psi}^{\otimes 2}\ \otimes\ \ket{\phi_1}^{\otimes N-2} \\
        \vdots \\
        \ket{\tau_{N-1}} &\coloneqq \ket{\tau_{N-1,NN\cdots NN}} = \ket{\psi}^{\otimes N-1}\ \otimes\ \ket{\phi_1} \\
        \ket{\tau_N} &\coloneqq \ket{\tau_{NNN\cdots NN}} = \ket{\psi}^{\otimes N}. 
    \end{aligned}
\end{equation}
Notice that the last state $\ket{\tau_N}$ indicates the sequence where no changepoint has occurred, i.e, Alice generated all the states as promised.
All the remaining states correspond to the sequences where exactly one changepoint occurred.
Therefore, Bob wishes to not only determine whether a changepoint occurred, but also the exact point of said occurrence.

Here we assume that the changepoint follows a uniform prior, i.e, each of the states $\ket{\tau_k}$, \\ $k \in \{1, \dots, N\}$, are equally likely.

\begin{lemma}
    \label{lem:G_sym_toep}
    Gram matrix $G$ of the possible sequences generated by Alice when there is at most one changepoint is a Hermitian Toeplitz matrix.
\end{lemma}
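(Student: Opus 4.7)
The plan is to compute the entries of the Gram matrix $G$ directly from the tensor product structure of the $\ket{\tau_k}$ and show that each entry depends only on $j - i$, which is the defining property of a Toeplitz matrix. Hermiticity then follows immediately since $G$ is a Gram matrix.

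First I would fix indices $i, j \in \{1, \dots, N\}$ and assume without loss of generality that $i \le j$. Then I would expand
\[
\braket{\tau_i}{\tau_j} = \bigl(\bra{\psi}^{\otimes i} \otimes \bra{\phi_1}^{\otimes (N-i)}\bigr)\bigl(\ket{\psi}^{\otimes j} \otimes \ket{\phi_1}^{\otimes (N-j)}\bigr)
\]
and split the $N$ tensor factors into three blocks: positions $1, \dots, i$ contribute $\braket{\psi}{\psi} = 1$, positions $i+1, \dots, j$ contribute $\braket{\phi_1}{\psi}$ (a total of $j-i$ times), and positions $j+1, \dots, N$ contribute $\braket{\phi_1}{\phi_1} = 1$. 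This yields
\[
G_{ij} = \braket{\tau_i}{\tau_j} = \braket{\phi_1}{\psi}^{\,j-i} \quad \text{for } i \le j.
\]
By the same computation (or by complex conjugation), for $i \ge j$ we have $G_{ij} = \braket{\psi}{\phi_1}^{\,i-j}$.

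Next I would observe that in both cases $G_{ij}$ depends only on $j - i$, so setting $t_k := \braket{\phi_1}{\psi}^{-k}$ for $k \le 0$ and $t_k := \braket{\psi}{\phi_1}^{k}$ for $k \ge 0$ (noting these agree at $k = 0$ with value $1$), we have $G_{ij} = t_{i-j}$, matching the Toeplitz form in Eq.~\eqref{eq:Toep_mat}. Hermiticity, $G_{ij} = \overline{G_{ji}}$, follows from the standard property of Gram matrices, and is directly visible here since $\braket{\phi_1}{\psi}^{j-i} = \overline{\braket{\psi}{\phi_1}^{j-i}}$.

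There is no real obstacle here; the lemma is essentially a direct calculation using the multiplicativity of the inner product over tensor products. The only thing to be mildly careful about is handling the boundary cases $i = 1$, $j = N$, and $i = j$ correctly so that the three blocks above are interpreted as empty when appropriate, but this is routine and does not alter the conclusion.
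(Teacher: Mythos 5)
Your proof is correct and follows essentially the same route as the paper's, which simply states the entry formula $G_{ij} = \braket{\phi_1}{\psi}^{j-i}$ for $i < j$ (and its conjugate otherwise) ``by direct computation'' and reads off the Hermitian Toeplitz structure. You merely spell out the tensor-factor bookkeeping that the paper leaves implicit, so there is nothing to change.
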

\begin{proof}
    By direct computation, we have
    \begin{align*}
        G_{ij} = 
        \begin{cases}
            \braket{\phi_1}{\psi}^{(j-i)} &i < j \\
            \braket{\psi}{\phi_1}^{(i-j)} &\text{otherwise}
        \end{cases}
    \end{align*}
    $G$ is clearly a Hermitian Toeplitz matrix from the above.
\end{proof}

\begin{lemma}
    \label{lem:T_sym_Toep}
    When there is at most one changepoint, the Gram matrix $G$ has the same reward as the Gram matrix $T^{(1)}$ where $(T^{(1)})_{ij} = |G_{ij}|$. Moreover, $T^{(1)}$ is a symmetric Toeplitz matrix.
\end{lemma}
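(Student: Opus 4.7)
The plan is twofold: first confirm directly that taking absolute values renders $G$ symmetric and Toeplitz, and second show that $G$ and $T_{1CP}$ are related by conjugation by a diagonal unitary, which leaves the reduced SDP constraints and objective unchanged.

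For the structural claim, I would write $\braket{\psi}{\phi_1} = r e^{i\theta}$ with $r = |\braket{\psi}{\phi_1}|$, and plug into the explicit form of $G$ from Lemma~\ref{lem:G_sym_toep} to obtain $G_{ij} = r^{|i-j|} e^{i\theta(i-j)}$ for all $i,j$. Taking the modulus of each entry gives $(T_{1CP})_{ij} = r^{|i-j|}$, which is real and depends only on $|i-j|$, so $T_{1CP}$ is both symmetric and Toeplitz.

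For the reward equality, define the diagonal unitary $D = \sum_k e^{-i\theta k}\ketbra{k}$. A direct computation shows $D G D^* = T_{1CP}$. Given any feasible $(W_g)_{g \in \Iset_L}$ for the reduced primal with Gram matrix $G$, set $\widetilde{W}_g := D W_g D^*$; positive semidefiniteness is preserved by unitary conjugation, and
\begin{equation}
\sum_{g \in \Iset_L} \widetilde{W}_g \;=\; D \Bigl(\sum_{g \in \Iset_L} W_g\Bigr) D^* \;=\; D G D^* \;=\; T_{1CP},
\end{equation}
so $(\widetilde{W}_g)$ is feasible for the reduced primal with Gram matrix $T_{1CP}$. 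Because $D$ is diagonal, $D\ket{c} = e^{-i\theta c}\ket{c}$, hence $\bra{c} \widetilde{W}_g \ket{c} = \bra{c} W_g \ket{c}$, and the objective is unchanged term-by-term. The map $W_g \mapsto \widetilde{W}_g$ is invertible via $W_g = D^* \widetilde{W}_g D$, so it is a bijection between feasible solutions that preserves the objective, yielding equality of the two optimal reward values.

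There is no real obstacle here; the only observation to extract is that every phase in $G$ factors through the character $k \mapsto e^{-i\theta k}$, which is exactly the consequence of having a single change point and hence only one overlap $\braket{\psi}{\phi_1}$ appearing in all entries.
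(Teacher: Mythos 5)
Your proof is correct, and it reaches the same conclusion through the same underlying idea---absorbing the phase of $\braket{\psi}{\phi_1}$---but implemented at a different level than the paper. The paper rephases the \emph{state} itself, replacing $\ket{\phi_1}$ by $e^{i\theta}\ket{\phi_1}$ so that each sequence state $\ket{\tau_i}$ changes only by a global phase, invokes the global-phase invariance of the original discrimination SDP \eqref{eq:gen_QSD2}, and then simply reads off that the Gram matrix of the rephased states is $T_{1CP}$ with $(T_{1CP})_{ij}=|G_{ij}|$. You instead stay entirely at the Gram-matrix level: you compute $G_{ij}=r^{|i-j|}e^{i\theta(i-j)}$ from Lemma~\ref{lem:G_sym_toep}, exhibit the diagonal unitary $D$ with $DGD^*=T_{1CP}$, and show that $W_g\mapsto DW_gD^*$ is an objective-preserving bijection between feasible solutions of the reduced SDP \eqref{reducedprimal} for $G$ and for $T_{1CP}$ (diagonal entries are untouched since $D$ is diagonal). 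Your route is slightly more self-contained, since it never needs to pass back through the original state-level SDP or Theorem~\ref{thm:reduction}, and it makes explicit the more general fact that conjugating the Gram matrix by any diagonal unitary leaves the reduced reward unchanged---which is precisely the Gram-level shadow of the phase freedom the paper exploits; the paper's argument, in exchange, is shorter and more physically transparent. Both are valid, and the structural claim (symmetric Toeplitz) is handled identically by direct computation.
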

\begin{proof}
    Define $\ket{\phi_1'} = e^{i \theta} \ket{\phi_1}$, where we choose $\theta \in \reals$ such that $\braket{\phi_1'}{\psi} = |\braket{\phi_1}{\psi}|$.
    From~\eqref{eq:gen_QSD2}, the value of the SDP clearly remains same if we replace any state with one that is equivalent up to a global phase.
    Define $\ket{\tau_i'} = \ket{\psi}^{\otimes i}\ \otimes\ \ket{\phi_1'}^{\otimes (N-i)}$, for all values of $i$.
    Then $\ket{\tau_i}$ and $\ket{\tau_i'}$ differ by a global phase, thus the SDPs have the same value.
    Let $T^{(1)}$ by the Gram matrix of $\ket{\tau_1'}, \dots, \ket{\tau_N'}$.
    Notice that $(T^{(1)})_{ij} = |\braket{\psi}{\phi_1}|^{|i-j|} = |G_{ij}|$. 
    Finally, $T^{(1)}$ is symmetric Toeplitz, as desired. 
\end{proof} 

The dual SDP in Eq.~(\ref{prog:D'_mQCP}) can be expressed as
\begin{equation}
    \label{eq:1CP_Toep}
    \begin{aligned}
        \beta'_{T1} = \min \quad& \hsip{X}{T^{(1)}} \\
        \st \quad& X \succcurlyeq \frac{1}{N} \sum_{j=1}^N R_{ij} | j \rangle\langle j |, i \in \{1, \dots, N\} \\
    \end{aligned}
\end{equation}

Consider the following reward scheme
\begin{align}
    \label{eq:R_1CP}
    (R_{1CP})_{ij} = 
    \begin{cases}
        r_{|i - j|},\ i \in \{1, \dots, N\} \\
        c,\ \hspace{15pt}\ i = N + 1
    \end{cases}
\end{align}
for $j \in \{1, \dots, N\}$, where $r_0, \dots, r_{N-1}$ and $c$ are real scalars.
The motivation behind this strategy is as follows.
When Bob is able to correctly identify the changepoint, he is awarded a reward of $r_0$, regardless of when the changepoint occurred.
Similarly, if the changepoint occurred at the $j$-th state, and Bob guessed the $i$-th state, irrespective of whether $i < j$ or $i > j$, he is given a reward of $r_{|i-j|}$.
Notwithstanding the state Bob received, if he gives an inconclusive outcome, he is awarded a constant reward of $c$.

Accounting for all this, we can rewrite the dual SDP in Eq.~(\ref{prog:D'_mQCP}) as follows
\begin{equation}
    \label{prog:D'_1CP}
    \begin{aligned}
        \beta'_{1CP} = \min \quad & \hsip{X_{1CP}}{T_{1CP}} \\
        \st \quad & X_{1CP} \succcurlyeq \frac{1}{N}\, \sum_{j=1}^N r_{|i - j|} \ketbra{j},\ i \in \{1, \dots, N\} \\
        & X_{1CP} \succcurlyeq \frac{c}{N}\, \id_N.
    \end{aligned}
\end{equation}

\paragraph{A heuristic.}
Observe that $(R_{1CP})_{ij} = r_{|i-j|},\ i, j \in \{1, \dots, N\}$ corresponds to a symmetric Toeplitz matrix.
Since $T_{1CP}$ is a symmetric Toeplitz matrix and $R_{1CP}$ has a Toeplitz-like structure, we investigate a heuristic solution $X'_{1CP}$ which is restricted to be symmetric Toeplitz.
The incentive behind such a restriction is that a symmetric Toeplitz matrix can be fully constructed from its first row, and thus has fewer parameters.
Then both $T_{1CP}$ and $X'_{1CP}$ can be expressed as
\begin{align}
    \label{eq:sym_Toep}
    T_{1CP} = \sum_{k=-(N-1)}^{(N-1)} \gamma^{|k|} \Theta_k \quad\qquad\quad X'_{1CP} = \sum_{k=-(N-1)}^{(N-1)} x_{|k|} \Theta_k,
\end{align}
where $\gamma = |\braket{\psi}{\phi_1}|$, and $\Theta_k$ is a $N \times N$ Toeplitz matrix with ones on the $k$-th diagonal and zeros elsewhere.
Here, $k = 0$ indicates the principal diagonal, $k > 0$ denotes the sub-diagonals, and $k < 0$ corresponds to the super-diagonals.

We can express Eq.~(\ref{prog:D'_1CP}) with this heuristic as
\begin{equation}
    \label{prog:D''_1CP}
    \begin{aligned}
        \beta''_{1CP} = \min \quad & \hsip{x}{p} \\
        \st \quad & \sum_{k=0}^{N-1} x_k \Theta_k \succcurlyeq \frac{1}{N}\, \sum_{j=1}^N r_{|i - j|} \ketbra{j},\ i \in \{1, \dots, N\} \\
        & \sum_{k=0}^{N-1} x_k \Theta_k \succcurlyeq \frac{c}{N}\, \id_N,
    \end{aligned}
\end{equation}
where $p = \left( N,\ 2(N-1) \gamma,\ 2(N-2) \gamma^2,\ \ldots,\ 2 \gamma^{N-1} \right)^\top$ and $x = \left( x_0, x_1, x_2, \ldots, x_{N-1} \right)^\top$.

In Appendix~\ref{sec:expts} we examine the cases when there are two or more changepoints.
The calculations are straightforward but somewhat tedious.

\paragraph{Related work.} 
For the minimum-error case,~\cite{sentis2016quantum} showed that in the limit of the sequence length $N$, the probability of success is bounded by
\begin{align}
    \label{eq:sentis2016bound}
    P_D \leq \frac{4 \sqrt{1 - \gamma^2}}{\pi} K^2(\gamma^2) + 4 \left( \frac{1 + \gamma}{1 - \gamma} \right)^{3/2} \frac{1}{N^{1 - \varepsilon}}
\end{align}
where $K(x)$ is the complete elliptic function of the first kind~\cite{darkhovsky2000non}, and $\varepsilon > 0$ is an arbitrary constant.
For a sequence of length $N = 50$, Figure~\ref{fig:ME_compare} compares the difference between the heuristic (Eq.~\eqref{prog:D''_1CP}) and the optimal SDP (Eq.~\eqref{prog:D'_1CP} where $r_{|i-j|} = \delta_{ij}$ for all $i, j$, and $c = 0$), with the difference between Eq.~\eqref{eq:sentis2016bound} and the aforementioned optimal SDP, for varying values of $\gamma^2$.
Observe that while our heuristic outperforms the bound in Eq.~\eqref{eq:sentis2016bound}, the heuristic requires us to solve an SDP while Eq.~\eqref{eq:sentis2016bound} provides a closed-form expression.
\begin{figure}[h!]
    \centering
    \includegraphics[scale=0.6]{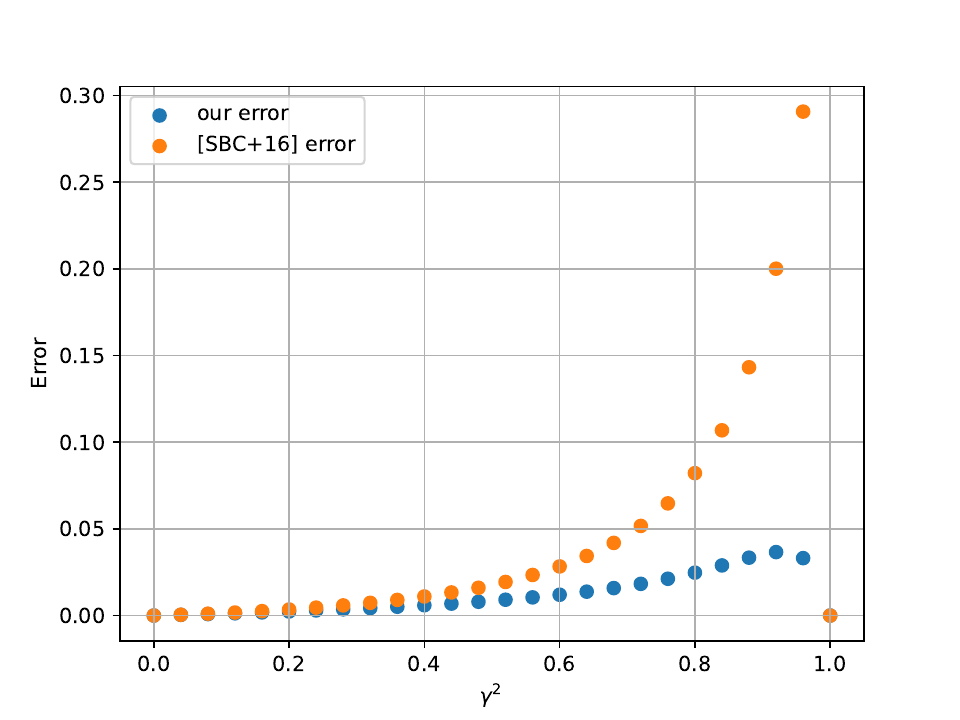}
    \caption{
    For a sequence of length $N = 50$, the difference between the heuristic (Eq.~\eqref{prog:D''_1CP}) and the optimal SDP is compared with the difference between Eq.~\eqref{eq:sentis2016bound} and the optimal SDP, as we vary the value of $\gamma^2 = |\bra{\psi}\ket{\phi_1}|^2$.
    Although our heuristic is a better bound than Eq.~\eqref{eq:sentis2016bound}, the heuristic requires us to solve an SDP while Eq.~\eqref{eq:sentis2016bound} is a closed-form expression.
    }
    \label{fig:ME_compare}
\end{figure} 

Similarly for the unambiguous case,~\cite{sentis2017exact} found that the success probability $P_u$ can be approximated by the expression
\begin{align}
    \label{eq:sentis2017approx}
    P_{u} \approx
    \begin{cases}
        \frac{1 - \gamma}{1 + \gamma} + \frac{1}{N} \frac{2 \gamma}{(1 + \gamma)^2}, \qquad\qquad\qquad\qquad 0 \leq \gamma \leq \gamma^* = (\sqrt{5} - 1)/2 \\
        \frac{1 - \gamma}{1 + \gamma} + \frac{1}{N} \frac{2 \gamma}{(1 + \gamma)^2} - \frac{2}{N} \left( \frac{1 - \gamma - \gamma^2}{1 + \gamma} \right)^2, \quad \gamma* < \gamma \leq 1.
    \end{cases}
\end{align}
For a sequence of length $N = 15$, Figure~\ref{fig:UA_compare} compares the difference between the heuristic (Eq.~\eqref{prog:D''_1CP}) and the optimal SDP (Eq.~\eqref{prog:D'_1CP} where $r_{|i-j|} = 1$ if $i = j$, $r_{|i-j|} = -M$ if $i \ne j$, and $c = 0$, for $M = 10^4$, with the difference between Eq.~\eqref{eq:sentis2017approx} and this optimal SDP, for varying values of $\gamma$.
For $\gamma \le \gamma^* = (\sqrt{5} - 1)/2$, the error between the heuristic and this optimal SDP is of the order $10^{-5}$.
When $\gamma > \gamma^*$, observe that Eq.~\eqref{eq:sentis2017approx} provides a much better bound compared to the heuristic. 

\begin{remark}
We remark that our original SDP value is an approximation on unambiguous discrimination since we cannot numerically set $M = \infty$. 
Thus, Figure~\ref{fig:UA_compare} compares the two heuristics compared to this SDP approximation. 
We see that in this case, both heuristics perform well. 
\end{remark}

\begin{figure}[h!]
    \centering
    \includegraphics[scale=0.6]{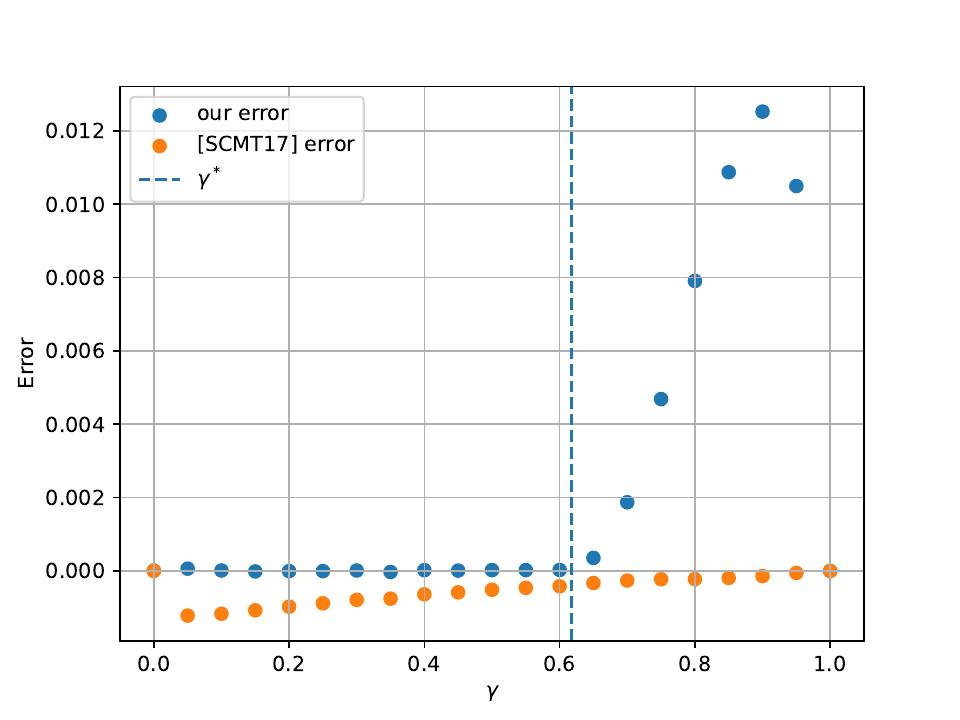} 
    \caption{
    For a sequence of length $N = 15$, we compare the difference between the heuristic (Eq.~\eqref{prog:D''_1CP}) and the optimal SDP, with the difference between Eq.~\eqref{eq:sentis2016bound} and the optimal SDP, for varying values of $\gamma$.
    For $\gamma \le \gamma^* = (\sqrt{5} - 1)/2$, the error between the heuristic and the optimal SDP is of the order $10^{-5}$.
    When $\gamma > \gamma^*$, observe that Eq.~\eqref{eq:sentis2016bound} provides a much better bound compared to the heuristic.
    }
    \label{fig:UA_compare}
\end{figure}

\section{Numerical experiments}\label{sec:expts}

To demonstrate the performance of our heuristic approach, we consider the following states
\begin{align}
    \label{eq:theta_states}
    \ket{\psi} = \ket{0}, \quad \ket{\phi_k} = \cos(k \theta) \ket{0} + \sin(k \theta) \ket{1},\ k \in \{1, \dots, P\}.
\end{align} 

Consider the case where Alice promises Bob $N$ copies of the state $\ket{0}$.
Bob suspects the following:
(1) At some point $c_1$, her device might have mutated from generating the state $\ket{0}$ to the state $\ket{+}$.
(2) At a different point $c_2$, a further mutation might have caused the switch from the state $\ket{+}$ to the state $\ket{1}$.
(3) At a third point $c_3$, one last mutation could have led to $\ket{1}$ turning it to $-\ket{-}$.
The possible mutations are as described in Figure~\ref{fig:pi_4_states}.

\def\xmax{2.0}
\def\ul{0.6}
\def\R{2.0}
\def\ang{45}

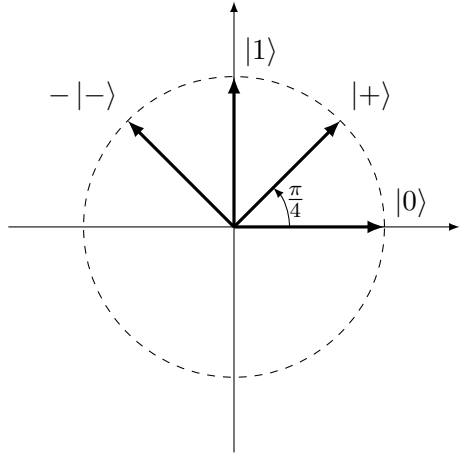
\begin{figure}[h]
    \centering
    \begin{tikzpicture}
        \coordinate (O) at (0,0);
        \coordinate (zero) at (\xmax,0);
        \coordinate (plus) at (\ang:\R);
        \coordinate (one) at (0,\xmax);
        \coordinate (neg_minus) at (3*\ang:\R);
  
        \draw[->] (-1.5*\xmax,0) -- (1.5*\xmax,0);
        \draw[->] (0,-1.5*\xmax) -- (0,1.5*\xmax);
  
        \node[above right] at (zero) {$\ket{0}$};
        \draw[vector] (O) -- (zero);
        \node[above right] at (plus) {$\ket{+}$};
        \draw[vector] (O) -- (plus);
        \node[above right] at (one) {$\ket{1}$};
        \draw[vector] (O) -- (one);
        \node[above left] at (neg_minus) {$-\ket{-}$};
        \draw[vector] (O) -- (neg_minus);

        \draw pic[->,"$\frac{\pi}{4}$",draw=black,angle radius=21,angle eccentricity=1.2] {angle=zero--O--plus};

        \draw[dashed] (O) circle (\R);
    \end{tikzpicture}
    \caption{The state $\ket{0}$ promised by Alice and the possible mutated states $\ket{+}, \ket{1}$ and $-\ket{-}$ that Bob suspects of receiving.}
    \label{fig:pi_4_states}
\end{figure}

Upon receiving all of the $N$ states, Bob can exploit the Bayes approach by considering the following states
\begin{equation}
    \label{eq:Bob's_suspicion_succinct}
    \begin{aligned}
        \ket{\tau_{c_1 c_2 c_3}} = -\ket{0}^{\otimes c_1}\ \otimes \ket{+}^{\otimes (c_2 - c_1)}\ \otimes \ket{1}^{\otimes (c_3 - c_2)}\ \otimes \ket{-}^{\otimes (N - c_3)} \\
        (c_1, c_2, c_3) \in \Iset_N
    \end{aligned}
\end{equation}

When no changepoints occur, Bob is looking for the state
\begin{align}
    \label{eq:no_CP_eg}
    \ket{\tau_{N N N}} = \ket{0}^{\otimes N}.
\end{align}
If he suspects a mutation to the state $\ket{+}$, he looks for
\begin{align}
    \label{eq:1CP_eg}
    \ket{\tau_{c_1 N N}} = \ket{0}^{\otimes c_1}\ \otimes \ket{+}^{\otimes (N - c_1)}.
\end{align}
A suspicion of a further mutation to state $\ket{1}$ requires looking for the state
\begin{align}
    \label{eq:2CP_eg}
    \ket{\tau_{c_1 c_2 N}} = \ket{0}^{\otimes c_1}\ \otimes \ket{+}^{\otimes (c_2 - c_1)}\ \otimes \ket{1}^{\otimes (N - c_2)}.
\end{align}
When Bob suspects that all of the possible mutations may have occurred, he looks for
\begin{align}
    \label{eq:3CP_eg}
    \ket{\tau_{c_1 c_2 c_3}} = - \ket{0}^{\otimes c_1}\ \otimes \ket{+}^{\otimes (c_2 - c_1)}\ \otimes \ket{1}^{\otimes (c_3 - c_2)}\ \otimes \ket{-}^{\otimes (N - c_3)}.
\end{align}

\paragraph{At most one changepoint.}
Here, we look at the case when Bob suspects Alice's device of either having no mutations or ones that generate $\ket{+}$.
So Bob needs to determine exactly where this changepoint occurred.
This is accomplished by discriminating between the states described by Eq.~(\ref{eq:no_CP_eg}) and Eq.~(\ref{eq:1CP_eg}).

We consider the closer-the-better reward strategy described in eq.~\eqref{eq:R_ME_ctbr} to obtain
\begin{align}
    \label{eq:1CP_ctbr}
    R_{ij} =
    \begin{cases}
        \left( \frac{1}{\sqrt{2}} \right)^{|i-j|} & i \in \{1, \dots, N\} \\
        0 &i = N+1
    \end{cases}
\end{align}
for $j \in \{1, \dots, N\}$.

Figure~\ref{fig:beta'_beta''_pi_4_1CP} considers the solution of the reduced SDP (Eq.~\eqref{reduceddual}) and the heuristic (Eq.~\eqref{reduceddual2}).
We note the following two observations:
(1) Our reduction (Eq.~\eqref{reduceddual}) is the probability of both detecting as well as localizing a single changepoint for sequences of length $220$.
Without this reduction, solving Eq.~\eqref{eq:gen_QSD2} involves finding operators of size $2^{220} \times 2^{220}$.
(2) Observe that solving our heuristic solution (Eq.~\eqref{reduceddual2}) is roughly \emph{seven} times faster, with a gap of only about $10^{-3}$.

\begin{figure}[h]
    \centering
    \begin{subfigure}[b]{0.495\textwidth}
        \centering
        \includegraphics[width=\textwidth]{betap_betapp_pi_4_1CP.pdf}
        \caption{Absolute value of the difference between Eq.~\eqref{reduceddual} and Eq.~\eqref{reduceddual2}.}
    \end{subfigure}
    \hfill
    \begin{subfigure}[b]{0.495\textwidth}
        \centering
        \includegraphics[width=\textwidth]{betap_betapp_time_pi_4_1CP.pdf}
        \caption{Runtimes of Eq.~\eqref{reduceddual} and Eq.~\eqref{reduceddual2}.}
    \end{subfigure}
    \caption{
    Here we consider the sequences where the state $\ket{0}$ mutated to the state $\ket{+}$. 
    Using the closer-the-better reward scheme with $\gamma = 1/\sqrt{2}$, we see the error between the heuristic (Eq.~\eqref{reduceddual2}) and the reduced SDP (Eq.~\eqref{reduceddual}) (on the left), and the runtimes of both Eq.~\eqref{reduceddual} and Eq.~\eqref{reduceddual2} (on the right).
    Observe that the difference between Eq.~\eqref{reduceddual} and Eq.~\eqref{reduceddual2} decreases rapidly approaching $10^{-3}$ at $N = 220$ (on the left).
    Although there seems to be no advantage, in terms of time, in computing Eq.~\eqref{reduceddual2} for sequences up to length $80$, we observe that computing Eq.~\eqref{reduceddual2} is roughly \emph{seven} times faster than calculating Eq.~\eqref{reduceddual} for larger values of $N$.
    }
    \label{fig:beta'_beta''_pi_4_1CP}
\end{figure}

\paragraph{At most two changepoints.}
In addition to the possible mutation to the state $\ket{+}$, here we look at the scenario where Bob suspects of a further mutation at some point $c_2$, to produce the state $\ket{1}$.
Bob's task is now to identify both of the changepoints, at $c_1$ as well as at $c_2$.
This can be accomplished by simply discriminating between all the possible states described in Eq.~(\ref{eq:no_CP_eg}, \ref{eq:1CP_eg}, \ref{eq:2CP_eg}).

The closer-the-better reward scheme of Eq.~\eqref{eq:R_ME_ctbr} is modified as
\begin{align}
    \label{eq:2CP_ctbr}
    R_{ij, kl} =
    \begin{cases}
        0,\ \hspace{90pt} i = N+1 \\
        \left( \frac{1}{\sqrt{2}} \right)^{|i-j|} \left( \frac{1}{\sqrt{2}} \right)^{|k-l|},\ i \ne N+1,\ j < k \\
        \left( \frac{1}{\sqrt{2}} \right)^{|i-k|} \left( \frac{1}{\sqrt{2}} \right)^{|j-l|},\ \text{otherwise}.
    \end{cases}
\end{align}

Figure~\ref{fig:beta'_beta''_pi_4_2CP} contrasts the solution of the reduced SDP (Eq.~\eqref{reduceddual}) when the variable $X$ is constrained to be Hermitian, and for the heuristic, we require the variable $X$ to satisfy Eq.~\eqref{eq:X_2CP}, as a function of the length of the sequence $N$.
Observe that for small sequence lengths up to length $13$, it may be disadvantageous to use the heuristic approach.
However for sequences of larger lengths, our heuristic is within $0.044$ of the solution, while being roughly \emph{eight} times faster to compute.

\begin{figure}[h]
    \centering
    \begin{subfigure}[b]{0.495\textwidth}
        \centering
        \includegraphics[width=\textwidth]{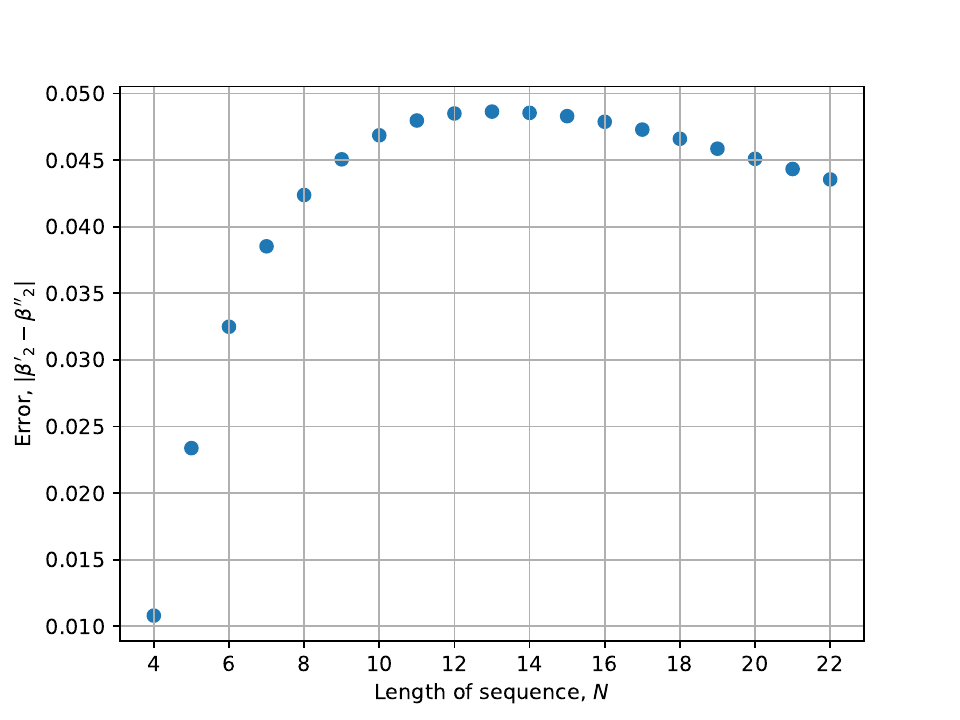}
        \caption{Absolute value of the difference between Eq.~\eqref{reduceddual} and its heuristic.}
    \end{subfigure}
    \hfill
    \begin{subfigure}[b]{0.495\textwidth}
        \centering
        \includegraphics[width=\textwidth]{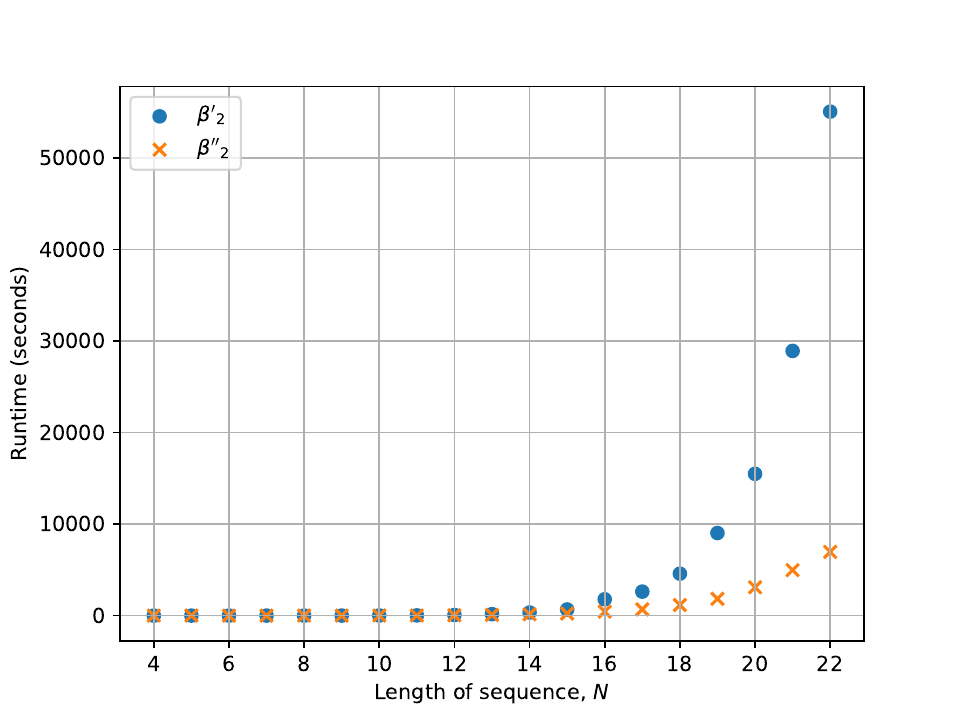}
        \caption{Runtimes of Eq.~\eqref{reduceddual} and its heuristic.}
    \end{subfigure}
    \caption{
    Here we look at the case where the state $\ket{0}$ mutated to the state $\ket{+}$ which in turn mutated to the state $\ket{1}$.
    For the reward described in Eq.~\eqref{eq:2CP_ctbr}, we depict the error between the heuristic and the solution (Eq.~\eqref{reduceddual}) (on the left), as well as the runtimes of both Eq.~\eqref{reduceddual} and the heuristic (on the right).
    For a short sequence up to length $13$, computing the heuristic might be a disadvantage.
    But for sequences of larger lengths, the heuristic is roughly \emph{eight} times faster than Eq.~\eqref{reduceddual} with a difference of only about $0.044$ between their computed values.
    To compute Eq.~\eqref{reduceddual} for sequence of length $22$, it takes about $15$ hours.
    If one were to continue this experiment for larger lengths of sequences, one should expect a behavior similar to Figure~\ref{fig:beta'_beta''_pi_4_1CP} (a).
    }
    \label{fig:beta'_beta''_pi_4_2CP}
\end{figure}

\paragraph{At most three changepoints.}
Finally, we consider all the possible sequential mutations, $\ket{0}$ to $\ket{+}$, $\ket{+}$ to $\ket{1}$, and $\ket{1}$ to $-\ket{-}$.
In other words, Bob must discriminate between all the possible states described in Eq.~(\ref{eq:no_CP_eg}, \ref{eq:1CP_eg}, \ref{eq:2CP_eg}, \ref{eq:3CP_eg}).

The reward strategy in accordance with the closer-the better rewards in Eq.~\eqref{eq:R_ME_ctbr} is
\begin{align}
    \label{eq:3CP_ctbr}
    R_{ijk, lmn} =
    \begin{cases}
        0,\ \hspace{140pt} i = N+1 \\
        \left( \frac{1}{\sqrt{2}} \right)^{|i-l|} \left( \frac{1}{\sqrt{2}} \right)^{|j-k|} \left( \frac{1}{\sqrt{2}} \right)^{|m-n|},\ i \ne N+1,\ l \le j \le k \le m \\
        \left( \frac{1}{\sqrt{2}} \right)^{|i-l|} \left( \frac{1}{\sqrt{2}} \right)^{|j-m|} \left( \frac{1}{\sqrt{2}} \right)^{|k-n|},\ i \ne N+1,\ l \le j \le n \text{ and } k > m \\
        \left( \frac{1}{\sqrt{2}} \right)^{|i-l|} \left( \frac{1}{\sqrt{2}} \right)^{|m-n|},\ \hspace{50pt} i \ne N+1,\ l \le m \le n < j \le k \\
        \left( \frac{1}{\sqrt{2}} \right)^{|i-j|} \left( \frac{1}{\sqrt{2}} \right)^{|l-k|} \left( \frac{1}{\sqrt{2}} \right)^{|m-n|},\ i \ne N+1,\ j < l \le k \le m \\
        \left( \frac{1}{\sqrt{2}} \right)^{|i-j|} \left( \frac{1}{\sqrt{2}} \right)^{|l-m|},\ \hspace{50pt} i \ne N+1,\ j < l \le k \text{ and } k > m \\
        \left( \frac{1}{\sqrt{2}} \right)^{|i-j|} \left( \frac{1}{\sqrt{2}} \right)^{|k-l|} \left( \frac{1}{\sqrt{2}} \right)^{|m-n|},\ i \ne N+1,\ j \le k < l
    \end{cases}. 
\end{align}

Figure~\ref{fig:beta'_beta''_pi_4_3CP} compares the runtime and the absolute difference in the solutions of the reduced SDP (Eq.~\eqref{reduceddual}) when the variable $X$ is constrained to be Hermitian, and the heuristic when the variable $X$ is constrained to satisfy Eq.~\eqref{eq:X_3CP}, as a function of the length of the sequence $N$.
Note here that for sequence of length $12$, solving Eq.~\eqref{reduceddual} involves computing a matrix of size $4096 \times 4096$.
Up to a sequence of length $8$, there is no gain in the runtime when computing the heuristic.
As $N$ gets larger, the heuristic is about \emph{ten} times faster than Eq.~\eqref{reduceddual} (on the right) with an error of roughly $0.083$ for $N = 12$.

\begin{figure}[htb]
    \centering
    \begin{subfigure}[b]{0.495\textwidth}
        \centering
        \includegraphics[width=\textwidth]{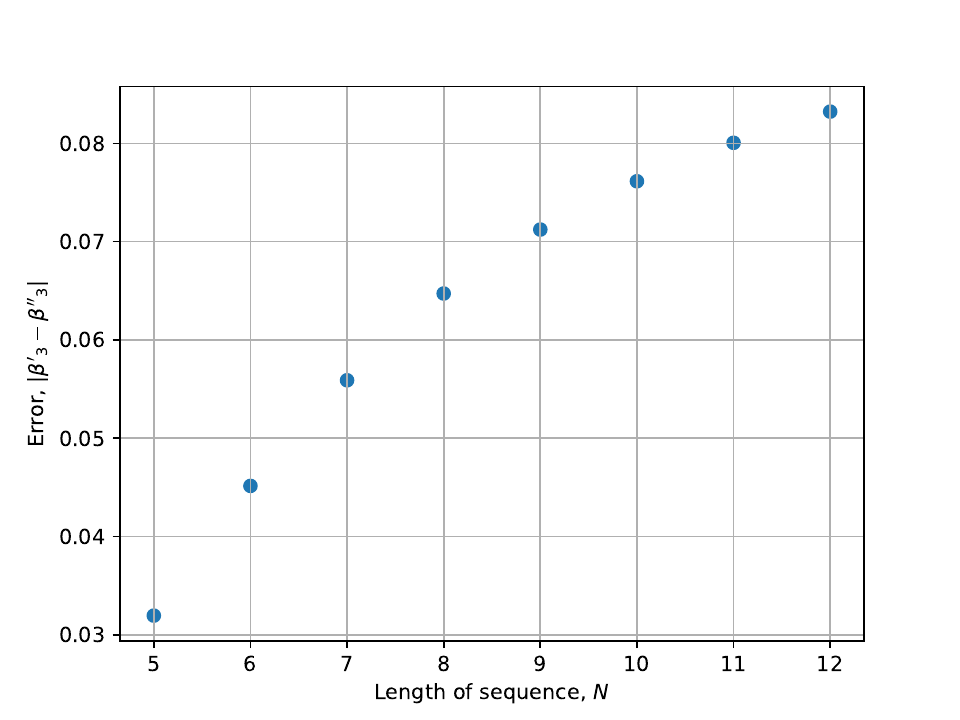}
        \caption{Absolute value of the difference between Eq.~\eqref{reduceddual} and its heuristic.}
    \end{subfigure}
    \hfill
    \begin{subfigure}[b]{0.495\textwidth}
        \centering
        \includegraphics[width=\textwidth]{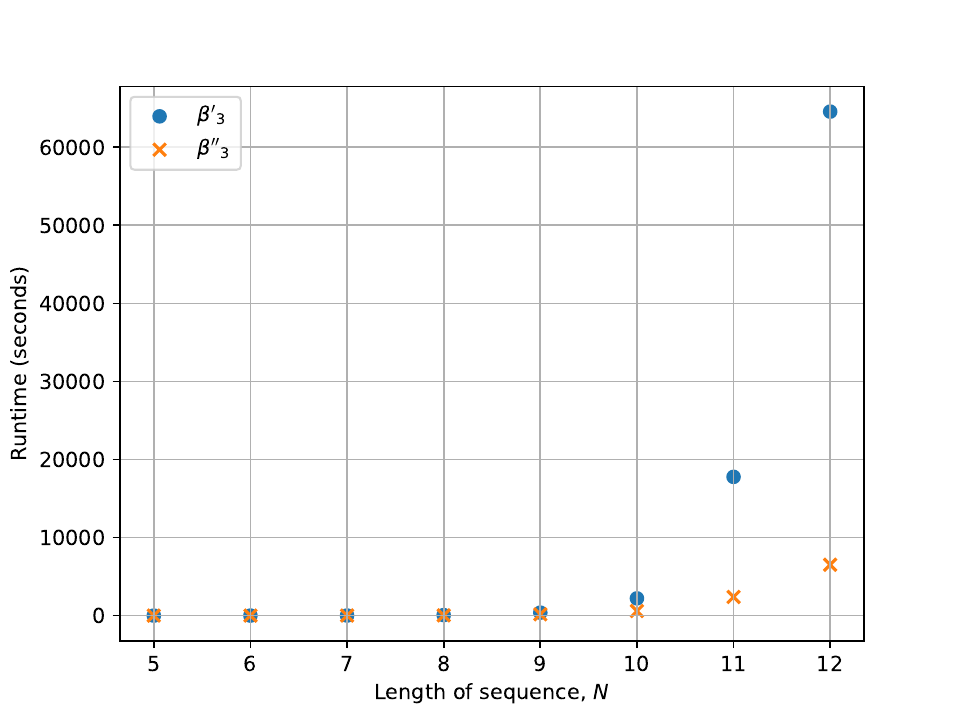}
        \caption{Runtimes of Eq.~\eqref{reduceddual} and its heuristic.}
    \end{subfigure}
    \caption{
    We look at sequences with three changepoints, each being sequences of $N$ states starting at $\ket{0}$ and first switching to $\ket{+}$, then to $\ket{1}$, and finally to $-\ket{-}$.
    Under the reward function of Eq.~\eqref{eq:3CP_ctbr}, we see the error (on the left) and the runtime (on the right) of both the solution Eq.~\eqref{reduceddual} and the heuristic.
    The SDP solving this heuristic is of size $\Ocomp(N^3)$.
    Comparing the runtimes we see that up to a sequence of $8$, there is no gain in the runtime by computing the heuristic.
    Beyond this, the heuristic is about \emph{ten} times faster than Eq.~\eqref{reduceddual} (on the right) with an error of roughly $0.083$ for $N = 12$.
    We terminate execution at $N = 12$ because computing Eq.~\eqref{reduceddual} takes $18$ hours.
    It would be reasonable to assume that for larger sequences, the error will resemble that of the one changepoint case illustrated in Figure~\ref{fig:beta'_beta''_pi_4_1CP} (a).
    }
    \label{fig:beta'_beta''_pi_4_3CP}
\end{figure}

We consider the previous discussion replacing these states $\ket{0}, \ket{+}, \ket{1}$ and $-\ket{-}$ with qubits differing by an angle of $\theta$ (Eq.~\eqref{eq:theta_states}) instead, in Appendix~\ref{sec:theta_expts}.

\section{Gram matrices and heuristics for multiple changepoints}

In this appendix we describe the scenarios for at most two and three changepoints. 

\subsection{At most two changepoints}\label{sec:2CP}

When dealing with the scenario where at most two changepoints can occur, Bob must focus on the following states
\begin{equation}
    \label{eq:2CP}
    \begin{aligned}
        \ket{\tau_{c_1 c_2}} &\coloneqq \ket{\tau_{c_1 c_2 N \cdots N}} = \ket{\psi}^{\otimes c_1}\ \otimes \ket{\phi_1}^{\otimes (c_2 - c_1)}\ \otimes \ket{\phi_2}^{\otimes (N - c_2)}, \quad (c_1, c_2) \in \Iset_N.
    \end{aligned}
\end{equation}

Note here that
\begin{align}
    \ket{\tau_{c_1}} \coloneqq \ket{\tau_{c_1 N}} = \ket{\psi}^{\otimes c_1}\ \otimes \ket{\phi_1}^{\otimes (N - c_1)}
\end{align}
denotes the sequences corresponding to a single changepoint and
\begin{align}
    \ket{\tau_N} \coloneqq \ket{\tau_{N N}} = \ket{\psi}^{\otimes N} 
\end{align}
indicates the sequence with no changepoint.

\paragraph{Gram matrix.}

The Gram matrix can be computed using the formula,

\begin{align}
    \label{eq:T_ijkl}
    (T_{2CP})_{ij,kl} =
    \begin{cases}
        \gamma_1^{|i-j|}\ \gamma_2^{|j-k|}\ \gamma_{12}^{|k-l|}, &j < k \\
        \gamma_1^{|i-k|}\ \gamma_{12}^{|j-l|},\ \hspace{30pt} &\text{otherwise}
    \end{cases} 
\end{align}
where we have assumed for simplicity that the overlaps, $\gamma_1, \gamma_2$ and $\gamma_{12}$ are all non-negative real numbers. 
We proved that this assumption is without loss of generality in the one changepoint case (Lemma~\ref{lem:T_sym_Toep}), assuming it here makes the technical analysis a bit tidier. 

When represented as a matrix, we obtain a symmetric block matrix as illustrated in Eq.~(\ref{eq:T_2CP}).
\begin{align}
    \label{eq:T_2CP}
    T_{2CP} =
    \begin{pmatrix}
        T_{11} & T_{12} & T_{13} & \cdots & T_{1,N-2} & T_{1,N-1} & \rvline & T_{1N} \\
        T_{12}^\top & T_{11}^{(1)} & T_{12}^{(1)} & \cdots & T_{1,N-3}^{(1)} & T_{1,N-2}^{(1)} & \rvline & T_{2N} \\
        T_{13}^\top & T_{12}^{(1)\top} & T_{11}^{(2)} & \cdots & T_{1,N-4}^{(2)} & T_{1, N-3}^{(2)} & \rvline & T_{3N} \\
        \vdots & \vdots & \vdots & \ddots & \vdots & \vdots & \rvline & \vdots \\
        T_{1,N-2}^\top & T_{1,N-3}^{(1)\top} & T_{1,N-4}^{(2)\top} & \cdots & T_{11}^{(N-2)} & T_{12}^{(N-2)} & \rvline & T_{N-2,N} \\
        T_{1,N-1}^\top & T_{1,N-2}^{(1)\top} & T_{1,N-3}^{(2)\top} & \cdots & T_{12}^{(N-2)\top} & T_{11}^{(N-1)} & \rvline & T_{N-1,N} \\
        \hline
        T_{1N}^\top & T_{2N}^\top & T_{3N}^\top & \cdots & T_{N-2,N}^\top & T_{N-1,N}^\top & \rvline & 1 \\
    \end{pmatrix}. 
\end{align}
Each block, denoted by $T_{ik}$, is an $(N-i) \times (N-k)$ Toeplitz matrix.
The block $T_{11}$ is a symmetric Toeplitz matrix where each entry $(T_{11})_{jl} = \gamma_{12}^{|j-l|}$.
The notation $T_{ik}^{(n)}$ indicates the matrix obtained by discarding the \emph{last} $n$ rows and the \emph{last} $n$ columns from $T_{ik}$.

\paragraph{Heuristic solution.}

Exploiting this structure, we consider a heuristic approach by restricting the solution $X'_{2CP}$ to have a similar structure
\begin{align}
    \label{eq:X_2CP}
    X'_{2CP} =
    \begin{pmatrix}
        X_{11} & X_{12} & X_{13} & \cdots & X_{1,N-2} & X_{1,N-1} & \rvline & \\
        X_{12}^\top & X_{11}^{(1)} & X_{12}^{(1)} & \cdots & X_{1,N-3}^{(1)} & X_{1,N-2}^{(1)} & \rvline & \\
        X_{13}^\top & X_{12}^{(1)\top} & X_{11}^{(2)} & \cdots & X_{1,N-4}^{(2)} & X_{1,N-3}^{(2)} & \rvline \\
        \vdots & \vdots & \vdots & \ddots & \vdots & \vdots & \rvline & x_N \\
        X_{1,N-2}^\top & X_{1,N-3}^{(1)\top} & X_{1,N-4}^{(2)\top} & \cdots & X_{11}^{(N-2)} & X_{12}^{(N-2)} & \rvline & \\
        X_{1,N-1}^\top & X_{1,N-2}^{(1)\top} & X_{1,N-3}^{(2)\top} & \cdots & X_{12}^{(N-2)\top} & X_{11}^{(N-1)} & \rvline & \\
        \hline
         &  &  & x_N^\top &  &  & \rvline & z \\ 
    \end{pmatrix},
\end{align}
where $X_{ik}$ is an $(N-i) \times (N-k)$ Toeplitz matrix.
The matrix $X_{11}$ is a symmetric Toeplitz matrix.
The remaining blocks $X_{ik}^{(n)}$ follow the same convention as its counterpart $T_{ik}^{(n)}$. 
Here $z$ denotes the last entry of the vector $x_N$.

\subsection{At most three changepoints} \label{sec:3CP}

If we have at most three changepoints, Bob has to consider the following states
\begin{equation}
    \label{eq:3CP}
    \begin{aligned}
        \ket{\tau_{c_1 c_2 c_3}} &\coloneqq \ket{\tau_{c_1 c_2 c_3 N \cdots N}} = \ket{\psi}^{\otimes c_1}\ \otimes \ket{\phi_1}^{\otimes (c_2 - c_1)}\ \otimes \ket{\phi_2}^{\otimes (c_3 - c_2)}\ \otimes \ket{\phi_3}^{\otimes (N - c_3)}, \quad (c_1, c_2, c_3) \in \Iset_N.
    \end{aligned}
\end{equation}
Note here that
\begin{align}
    \ket{\tau_{c_1 c_2}} \coloneqq \ket{\tau_{c_1 c_2 N}} = \ket{\psi}^{\otimes c_1}\ \otimes \ket{\phi_1}^{\otimes (c_2 - c_1)}\ \otimes \ket{\phi_2}^{\otimes (N - c_2)}
\end{align}
is the sequence for at most two changepoints,
\begin{align}
    \ket{\tau_{c_1}} \coloneqq \ket{\tau_{c_1 N N}} = \ket{\psi}^{\otimes c_1}\ \otimes \ket{\phi_1}^{\otimes (N - c_1)}
\end{align}
denotes the sequences corresponding to a single changepoint and
\begin{align}
    \ket{\tau_N} \coloneqq \ket{\tau_{N N N}} = \ket{\psi}^{\otimes N} 
\end{align}
indicates the sequence with no changepoint.

\paragraph{Gram matrix.}

\begin{align}
    T_{ijk,lmn} =
    \begin{cases}
        \gamma_1^{|i-l|}\ \gamma_{12}^{|j-k|}\ \gamma_{13}^{|k-m|} \gamma_{23}^{|m-n|},\ \hspace{30pt} l \le j \le k \le m \\
        \gamma_1^{|i-l|}\ \gamma_{12}^{|j-m|}\ \gamma_{23}^{|k-n|},\ \hspace{60pt} l \le j \le n \text{ and } k > m \\
        \gamma_1^{|i-l|}\ \gamma_{12}^{|m-n|}\ \gamma_{13}^{|n-j|},\ \hspace{60pt} l \le m \le n < j \le k \\
        \gamma_1^{|i-j|}\ \gamma_2^{|j-l|}\ \gamma_{12}^{|l-k|}\ \gamma_{13}^{|k-m|}\ \gamma_{23}^{|m-n|},\ j < l \le k \le m \\
        \gamma_1^{|i-j|}\ \gamma_2^{|j-l|}\ \gamma_{12}^{|l-m|},\ \hspace{62pt} j < l \le k \text{ and } k > m \\
        \gamma_1^{|i-j|}\ \gamma_2^{|j-k|}\ \gamma_3^{|k-l|}\ \gamma_{13}^{|l-m|}\ \gamma_{23}^{|m-n|},\ j \le k < l
    \end{cases}. 
\end{align}

In this formula, we assume that $i \le l$.
It is straightforward to compute the other entries since the block matrix is symmetric.

The Gram matrix can be depicted as the block matrix denoted in Eq.~(\ref{eq:T_3CP}).
\begin{align}
    \label{eq:T_3CP}
    T_{3CP} =
    \begin{pmatrix}
        V_{11} & V_{12} & V_{13} & \cdots & V_{1,N-2} & V_{1,N-1} & \rvline & V_{1N} \\
        V_{12}^\top & V_{11}^{[1]} & V_{12}^{[1]} & \cdots & V_{1,N-3}^{[1]} & V_{1,N-2}^{[1]} & \rvline & V_{2N} \\
        V_{13}^\top & V_{12}^{[1]\top} & V_{11}^{[2]} & \cdots & V_{1,N-4}^{[2]} & V_{1, N-3}^{[2]} & \rvline & V_{3N} \\
        \vdots & \vdots & \vdots & \ddots & \vdots & \vdots & \rvline & \vdots \\
        V_{1,N-2}^\top & V_{1,N-3}^{[1]\top} & V_{1,N-4}^{[2]\top} & \cdots & V_{11}^{[N-3]} & V_{12}^{[N-3]} & \rvline & V_{N-2,N} \\
        V_{1,N-1}^\top & V_{1,N-2}^{[1]\top} & V_{1,N-3}^{[2]\top} & \cdots & V_{12}^{[N-3]\top} & V_{11}^{[N-2]} & \rvline & V_{N-1,N} \\
        \hline
        V_{1N}^\top & V_{2N}^\top & V_{3N}^\top & \cdots & V_{N-2,N}^\top & V_{N-1,N}^\top & \rvline & 1 \\
    \end{pmatrix}. 
\end{align}
Each block $V_{ik}$ is a matrix containing $(N-i) \times (N-k)$ blocks.
The notation $V_{ik}^{[m]}$ indicates the block matrix obtained by discarding the blocks along the \emph{first} $m$ rows and the \emph{first} $m$ columns of $V_{ik}$.

Determining the matrices along the \emph{first} row and the \emph{last} column of $T_{3CP}$ is sufficient to construct the entire matrix.
Eqs.~\eqref{eq:V_11}-\eqref{eq:V_2N} describe how to construct each of the matrices $V_{1k}$ and $V_{iN}$ for $i, k \in \{1, \dots, N-1\}$.
Each block in this block matrix $S_{ij,kl}$ is constructed in the same manner as was used by $T_{ik}$ in Section~\ref{sec:2CP}.
The matrix $S_{12,12}$ is symmetric Toeplitz where each entry is given as $(S_{12,12})_{jl} = \gamma_{23}^{|j-l|}$.
We therefore have the following matrices

\begin{align}
    \label{eq:V_11}
    V_{11} =
    \begin{pmatrix}
        S_{12,12} & S_{12,13} & S_{12,14} & \cdots & S_{12,1,N-2} & S_{12,1,N-1} & \rvline & S_{12,1N} \\
        S_{12,13}^\top & S_{12,12}^{(1)} & S_{12,13}^{(1)} & \cdots & S_{12,1,N-3}^{(1)} & S_{12,1,N-2}^{(1)} & \rvline & S_{13,1N} \\
        S_{12,14}^\top & S_{12,13}^{(1)\top} & S_{12,12}^{(2)} & \cdots & S_{12,1,N-4}^{(2)} & S_{12,1,N-3}^{(2)} & \rvline & S_{14,1N} \\
        \vdots & \vdots & \vdots & \ddots & \vdots & \vdots & \rvline & \vdots \\
        S_{12,1,N-2}^\top & S_{12,1,N-3}^{(1)\top} & S_{12,1,N-4}^{(2)\top} & \cdots & S_{12,12}^{(N-4)} & S_{12,13}^{(N-4)} & \rvline & S_{1,N-2,1N} \\
        S_{12,1,N-1}^\top & S_{12,1,N-2}^{(1)\top} & S_{12,1,N-3}^{(2)\top} & \cdots & S_{12,13}^{(N-4)\top} & S_{12,12}^{(N-3)} & \rvline & S_{1,N-1,1N} \\
        \hline
        S_{12,1N}^\top & S_{13,1N}^\top & S_{14,1N}^\top & \cdots & S_{1,N-2,1N}^\top & S_{1,N-1,1N} & \rvline & 1 \\
    \end{pmatrix}
\end{align}

\begin{align}
    \label{eq:V_12}
    V_{12} =
    \begin{pmatrix}
        S_{12,23} & S_{12,24} & S_{12,25} & \cdots & S_{12,2,N-2} & S_{12,2,N-1} & \rvline & S_{12,2N} \\
        S_{13,23} & S_{12,23}^{(1)} & S_{12,24}^{(1)} & \cdots & S_{12,2,N-3}^{(1)} & S_{12,2,N-2}^{(1)} & \rvline & S_{13,2N} \\
        S_{12,23}^{(1)\top} & S_{13,23}^{(1)} & S_{12,23}^{(2)} & \cdots & S_{12,2,N-4}^{(2)} & S_{12,2,N-3}^{(2)} & \rvline & S_{14,2N} \\
        \vdots & \vdots & \vdots & \ddots & \vdots & \vdots & \rvline & \vdots \\
        S_{12,2,N-3}^{(1)\top} & S_{12,2,N-4}^{(2)\top} & S_{12,2,N-5}^{(2)\top} & \cdots & S_{13,23}^{(N-5)} & S_{12,23}^{(N-5)} & \rvline & S_{1,N-2,2N} \\
        S_{12,2,N-2}^{(1)\top} & S_{12,2,N-3}^{(2)\top} & S_{12,2,N-4}^{(2)\top} & \cdots & S_{12,23}^{(N-5)\top} & S_{13,23}^{(N-4)} & \rvline & S_{1,N-1,2N} \\
        \hline
        S_{13,2N}^\top & S_{14,2N}^\top & S_{15,2N}^\top & \cdots & S_{1,N-2,2N}^\top & S_{1,N-1,2N}^\top & \rvline & S_{1N,2N}
    \end{pmatrix}
\end{align}

\begin{align}
    \label{eq:V_13}
    V_{13} = 
    \begin{pmatrix}
        S_{12,34} & S_{12,35} & S_{12,36} & \cdots & S_{12,3,N-2} & S_{12,3,N-1} & \rvline & S_{12,3N} \\
        S_{13,34} & S_{12,34}^{(1)} & S_{12,35}^{(1)} & \cdots & S_{12,3,N-3}^{(1)} & S_{12,3,N-2}^{(1)} & \rvline & S_{13,3N} \\
        S_{14,34} & S_{13,34}^{(1)} & S_{12,34}^{(1)} & \cdots & S_{12,3,N-4}^{(2)} & S_{12,3,N-3}^{(2)} & \rvline & S_{14,3N} \\
        \vdots & \vdots & \vdots & \ddots & \vdots & \vdots & \rvline & \vdots \\
        S_{12,3,N-4}^{(2)\top} & S_{12,3,N-5}^{(3)\top} & S_{12,3,N-6}^{(4)\top} & \cdots & S_{14,34}^{(N-6)} & S_{13,34}^{(N-6)} & \rvline & S_{1,N-2,3N} \\
        S_{12,3,N-3}^{(2)\top} & S_{12,3,N-4}^{(3)\top} & S_{12,3,N-5}^{(4)\top} & \cdots & S_{13,34}^{(N-6)\top} & S_{14,34}^{(N-5)} & \rvline & S_{1,N-1,3N} \\
        \hline
        S_{14,3N}^\top & S_{15,3N}^\top & S_{16,3N}^\top & \cdots & S_{1,N-2,3N}^\top & S_{1,N-1,3N}^\top & \rvline & S_{1N,3N}
    \end{pmatrix}
\end{align}

\begin{align}
    \label{eq:V_1N_2}
    V_{1,N-2} =
    \begin{pmatrix}
        S_{12,N-2,N-1} & \rvline & S_{12,N-2,N} \\
        S_{13,N-2,N-1} & \rvline & S_{13,N-2,N} \\
        S_{14,N-2,N-1} & \rvline & S_{14,N-2,N} \\
        \vdots & \rvline & \vdots \\
        S_{1,N-2,N-2,N-1} & \rvline & S_{1,N-2,N-2,N} \\
        S_{1,N-1,N-2,N-1} & \rvline & S_{1,N-1,N-2,N} \\
        S_{1N,N-2,N-1} & \rvline & S_{1N,N-2,N} \\
    \end{pmatrix}
    \quad
    V_{1,N-1} = 
    \begin{pmatrix}
        S_{12,N-1,N} \\
        S_{13,N-1,N} \\
        S_{14,N-1,N} \\
        \vdots \\
        S_{1,N-2,N-1,N} \\
        S_{1,N-1,N-1,N} \\
        S_{1N,N-1,N} \\
    \end{pmatrix}
    \quad
    V_{1N} =
    \begin{pmatrix}
        S_{12,NN} \\
        S_{13,NN} \\
        S_{14,NN} \\
        \vdots \\
        S_{1,N-2,NN} \\
        S_{1,N-1,NN} \\
        S_{1N,NN} \\
    \end{pmatrix}
\end{align}

\begin{align}
    \label{eq:V_2N}
    V_{2N} =
    \begin{pmatrix}
        S_{23,NN} \\
        S_{24,NN} \\
        \vdots \\
        S_{2,N-2,NN} \\
        S_{2,N-1,NN} \\
        S_{2N,NN} \\
    \end{pmatrix},
    \
    V_{3N} =
    \begin{pmatrix}
        S_{34,NN} \\
        \vdots \\
        S_{3,N-2,NN} \\
        S_{3,N-1,NN} \\
        S_{3N,NN} \\
    \end{pmatrix},
    \
    V_{N-2,N} =
    \begin{pmatrix}
        S_{N-2,N-1,NN} \\
        S_{N-2,N,NN}
    \end{pmatrix},
    \
    V_{N-1,N} =
    \begin{pmatrix}
        S_{N-1,N,NN}
    \end{pmatrix}.
\end{align}

\paragraph{Heuristic approach.}

We once again exploit the Toeplitz-like structure of the Gram matrix and restrict the solution $X_{3CP}$ to have a similar structure as below
\begin{align}
    \label{eq:X_3CP}
    X'_{3CP} =
    \begin{pmatrix}
        Y_{11} & Y_{12} & Y_{13} & \cdots & Y_{1,N-2} & Y_{1,N-1} & \rvline & \\
        Y_{12}^\top & Y_{11}^{[1]} & Y_{12}^{[1]} & \cdots & Y_{1,N-3}^{[1]} & Y_{1,N-2}^{[1]} & \rvline & \\
        Y_{13}^\top & Y_{12}^{[1]\top} & Y_{11}^{[2]} & \cdots & Y_{1,N-4}^{[2]} & Y_{1,N-3}^{[2]} & \rvline \\
        \vdots & \vdots & \vdots & \ddots & \vdots & \vdots & \rvline & y_N \\
        Y_{1,N-2}^\top & Y_{1,N-3}^{[1]\top} & Y_{1,N-4}^{[2]\top} & \cdots & Y_{11}^{[N-3]} & Y_{12}^{[N-3]} & \rvline & \\
        Y_{1,N-1}^\top & Y_{1,N-2}^{[1]\top} & Y_{1,N-3}^{[2]\top} & \cdots & Y_{12}^{[N-3]\top} & Y_{11}^{[N-2]} & \rvline & \\
        \hline
         &  &  & y_N^\top &  &  & \rvline & z \\
    \end{pmatrix}
\end{align}

\begin{align}
    \label{eq:Y_11}
    Y_{11} =
    \begin{pmatrix}
        W_{12,12} & W_{12,13} & W_{12,14} & \cdots & W_{12,1,N-2} & W_{12,1,N-1} & \rvline & \\
        W_{12,13}^\top & W_{12,12}^{(1)} & W_{12,13}^{(1)} & \cdots & W_{12,1,N-3}^{(1)} & W_{12,1,N-2}^{(1)} & \rvline & \\
        W_{12,14}^\top & W_{12,13}^{(1)\top} & W_{12,12}^{(2)} & \cdots & W_{12,1,N-4}^{(2)} & W_{12,1,N-3}^{(2)} & \rvline & \\
        \vdots & \vdots & \vdots & \ddots & \vdots & \vdots & \rvline & w_{1N} \\
        W_{12,1,N-2}^\top & W_{12,1,N-3}^{(1)\top} & W_{12,1,N-4}^{(2)\top} & \cdots & W_{12,12}^{(N-4)} & W_{12,13}^{(N-4)} & \rvline & \\
        W_{12,1,N-1}^\top & W_{12,1,N-2}^{(1)\top} & W_{12,1,N-3}^{(2)\top} & \cdots & W_{12,13}^{(N-4)\top} & W_{12,12}^{(N-3)} & \rvline & \\
        \hline
         &  &  & w_{1N}^\top &  &  & \rvline & z_{1N} \\
    \end{pmatrix}
\end{align}

\begin{align}
    \label{eq:Y_12}
    Y_{12} =
    \begin{pmatrix}
        W_{12,23} & W_{12,24} & W_{12,25} & \cdots & W_{12,2,N-2} & W_{12,2,N-1} & \rvline & \\
        W_{13,23} & W_{12,23}^{(1)} & W_{12,24}^{(1)} & \cdots & W_{12,2,N-3}^{(1)} & W_{12,2,N-2}^{(1)} & \rvline &  \\
        W_{12,23}^{(1)\top} & W_{13,23}^{(1)} & W_{12,23}^{(2)} & \cdots & W_{12,2,N-4}^{(2)} & W_{12,2,N-3}^{(2)} & \rvline &  \\
        \vdots & \vdots & \vdots & \ddots & \vdots & \vdots & \rvline & w_{2N} \\
        W_{12,2,N-3}^{(1)\top} & W_{12,2,N-4}^{(2)\top} & W_{12,2,N-5}^{(2)\top} & \cdots & W_{13,23}^{(N-5)} & W_{12,23}^{(N-4)} & \rvline &  \\
        W_{12,2,N-2}^{(1)\top} & W_{12,2,N-3}^{(2)\top} & W_{12,2,N-4}^{(2)\top} & \cdots & W_{12,23}^{(N-4)\top} & W_{13,23}^{(N-4)} & \rvline &  \\
        \hline
         &  &  & w_{2N}^\top &  &  & \rvline & z_{2N} \\
    \end{pmatrix}
\end{align}

\begin{align}
    \label{eq:Y_13}
    Y_{13} =
    \begin{pmatrix}
        W_{12,34} & W_{12,35} & W_{12,36} & \cdots & W_{12,3,N-2} & W_{12,3,N-1} & \rvline &  \\
        W_{13,34} & W_{12,34}^{(1)} & W_{12,35}^{(1)} & \cdots & W_{12,3,N-3}^{(1)} & W_{12,3,N-2}^{(1)} & \rvline &  \\
        W_{14,34} & W_{13,34}^{(1)} & W_{12,34}^{(1)} & \cdots & W_{12,3,N-4}^{(2)} & W_{12,3,N-3}^{(2)} & \rvline &  \\
        \vdots & \vdots & \vdots & \ddots & \vdots & \vdots & \rvline & w_{3N} \\
        W_{12,3,N-4}^{(2)\top} & W_{12,3,N-5}^{(3)\top} & W_{12,3,N-6}^{(4)\top} & \cdots & W_{14,34}^{(N-6)} & W_{13,34}^{(N-6)} & \rvline &  \\
        W_{12,3,N-3}^{(2)\top} & W_{12,3,N-4}^{(3)\top} & W_{12,3,N-5}^{(4)\top} & \cdots & W_{13,34}^{(N-6)\top} & W_{14,34}^{(N-5)} & \rvline &  \\
        \hline
         &  &  & w_{3N}^\top &  &  & \rvline & z_{3N}
    \end{pmatrix}
\end{align}

\begin{align}
    \label{eq:Y_1N_2}
    Y_{1,N-2} =
    \begin{pmatrix}
        W_{12,N-2,N-1} & \rvline &  \\
        W_{13,N-2,N-1} & \rvline &  \\
        W_{14,N-2,N-1} & \rvline &  \\
        \vdots & \rvline & w_{N-2,N} \\
        W_{1,N-2,N-2,N-1} & \rvline &  \\
        W_{1,N-1,N-2,N-1} & \rvline &  \\
        W_{1N,N-2,N-1} & \rvline &  \\
    \end{pmatrix},
\end{align}

where $Y_{1,N-1}$ is a vector of size $N-1$, while $y_N$ is a vector of size $N + \frac{N (N-1) (N-2)}{6}$.

\subsection{At most $P$ changepoints}

Observe that $T_{11}$ in Eq.~\eqref{eq:T_2CP} is constructed in the exact same manner as $T_{1CP}$ used in the one changepoint case.
Similarly, $V_{11}$ in Eq.~\eqref{eq:T_3CP} follows the same construction as $T_{2CP}$ described in Eq.~\eqref{eq:T_2CP}.
Therefore, we guess that a heuristic solution $X'$ to solve the general case can be constructed by exploiting the structure of the corresponding Gram matrix in a similar manner as the cases we examined.

\section{Numerical experiments for changepoints with varying overlaps}\label{sec:theta_expts}

We reproduce some of the numerical experiments in section~\ref{sec:expts} with states differing by an angle $\theta$ (section~\ref{sec:expts} considered $\theta = \pi/4$).
Note that the rewards also change from $1/\sqrt{2}$ to $\cos(\theta)$ in Eqs.~(\ref{eq:1CP_ctbr}, \ref{eq:2CP_ctbr}, \ref{eq:3CP_ctbr}).

Figures~\ref{fig:1CP},~\ref{fig:2CP} and~\ref{fig:3CP} illustrate the optimal reward function and our heuristic for one, two, and three changepoints when $\theta$ takes values in $\{\pi/8,\ \pi/4,\ 3\pi/8\}$.

One thing we note is that the SDP solver is unable to find the solution for the three changepoint scenario when $\theta = \pi/8$ and hence returns an arbitrary value of $-1$ for sequences of length greater than $5$.
Note that even when this occurs, our heuristic approach is able to compute a solution.

A general observation from these figures is that for states that are farther apart, the absolute difference between the heuristic and the solution decreases very rapidly.
As the states get closer to one another, this decrease in the difference happens at a much slower rate.
A possible explanation for this is that when the states are farther apart to begin with, it is not hard to distinguish them as is.
As the length of the sequence increases, the ease of distinguishability also increases.
As the states get closer to one another, the sequence must be fairly large before it becomes easy to distinguish them.

\begin{figure}[h]
    \centering
    \begin{subfigure}[b]{0.495\textwidth}
        \centering
        \includegraphics[width=\textwidth]{prob_1CP.pdf}
        \caption{Optimal reward function value as a function of $N$ for $\theta \in \{\pi/8, \pi/4, 3\pi/8\}$.}
    \end{subfigure}
    \hfill
    \begin{subfigure}[b]{0.495\textwidth}
        \centering
        \includegraphics[width=\textwidth]{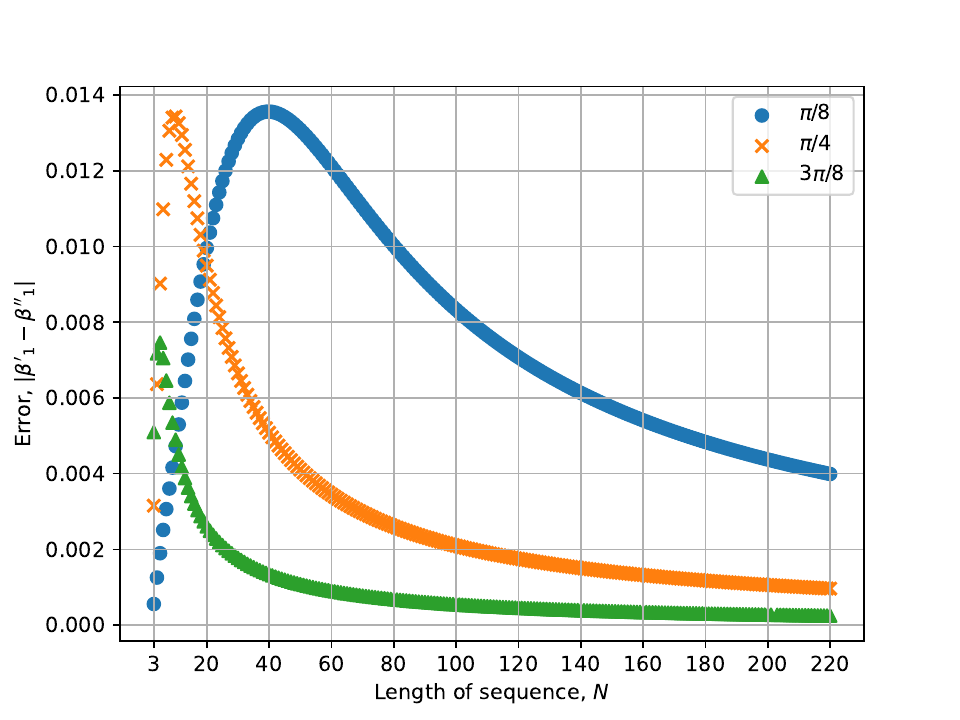}
        \caption{Absolute value of the difference between the solution and the heuristic for $\theta \in \{\pi/8, \pi/4, 3\pi/8\}$ as a function of $N$.}
    \end{subfigure}
    \caption{
    Consider the sequences when Alice promises Bob $N$ copies of the state $\ket{0}$ which mutate to the state $\ket{\phi_1} = \cos(\theta) + \sin(\theta)$ for $\theta \in \{\pi/8, \pi/4, 3\pi/8\}$.
    We compare the optimal reward function values for varying $\theta$ as a function of the length of the sequence $N$ (on the left), and the absolute difference between the heuristic and the solution (on the right).
    Observe that for states that are farther apart ($\theta = 3\pi/8$), the error approaches zero rather quickly.
    As the states get closer, it takes much longer for the difference to decrease.
    }
    \label{fig:1CP}
\end{figure}

\begin{figure}[h]
    \centering
    \begin{subfigure}[b]{0.495\textwidth}
        \centering
        \includegraphics[width=\textwidth]{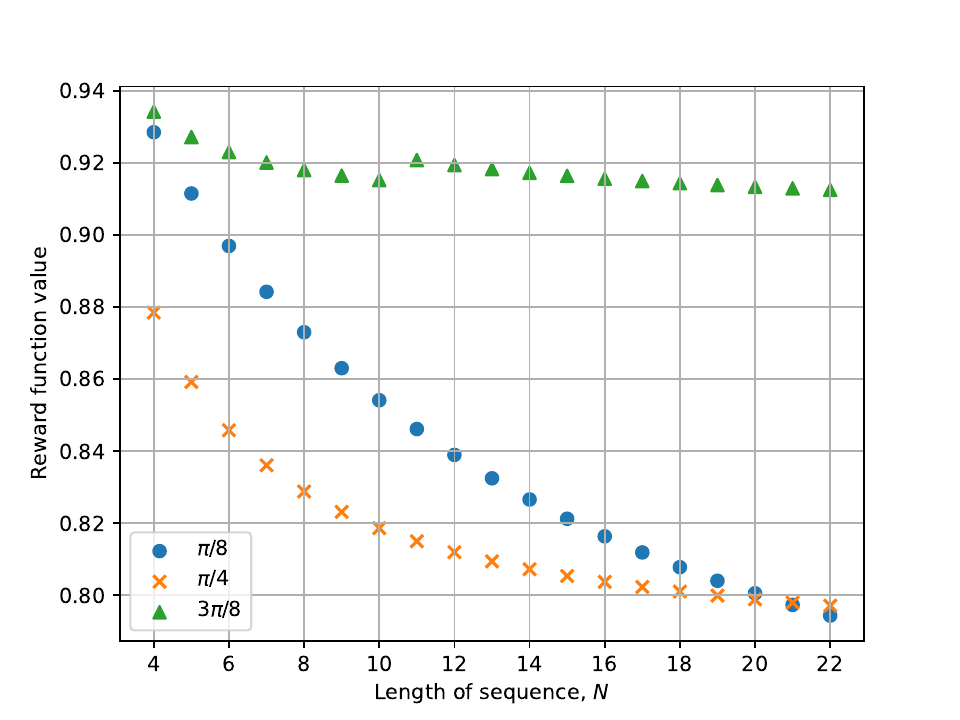}
        \caption{Optimal reward function value as a function of $N$ for $\theta \in \{\pi/8, \pi/4, 3\pi/8\}$.}
    \end{subfigure}
    \hfill
    \begin{subfigure}[b]{0.495\textwidth}
        \centering
        \includegraphics[width=\textwidth]{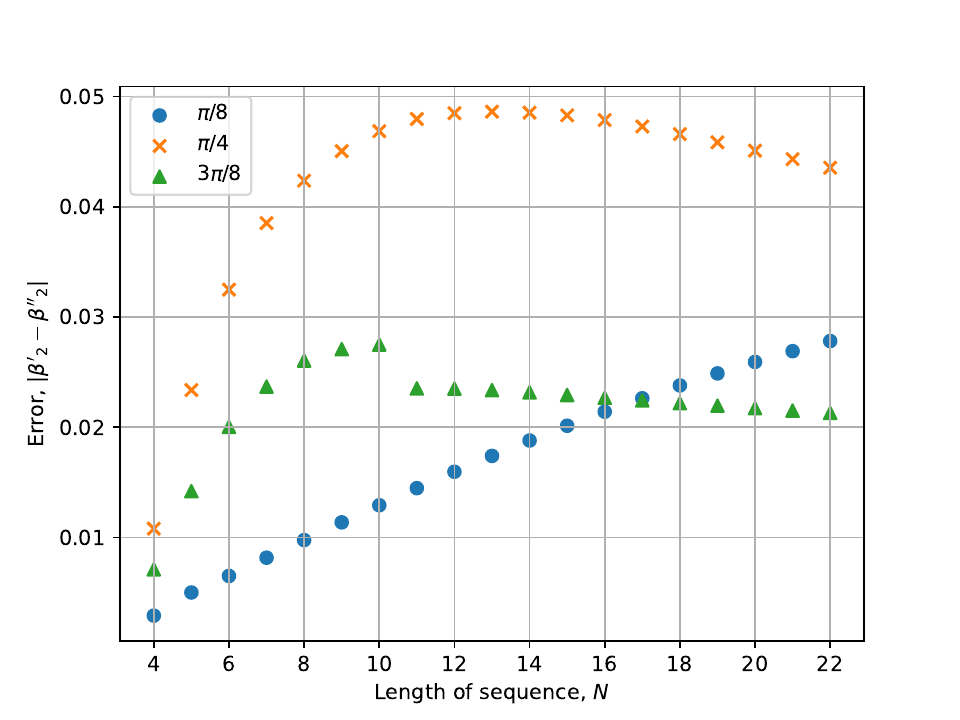}
        \caption{Absolute value of the difference between the solution and the heuristic for $\theta \in \{\pi/8, \pi/4, 3\pi/8\}$ as a function of $N$.}
    \end{subfigure}
    \caption{
    Here we consider the scenario where the state $\ket{0}$ first mutates to the state \\ $\ket{\phi_1} = \cos(\theta) + \sin(\theta)$, which further mutates to the state $\ket{\phi_2} = \cos(2\theta) + \sin(2\theta)$, for \\ $\theta \in \{\pi/8, \pi/4, 3\pi/8\}$.
    The optimal reward function values for varying $\theta$ as a function of the length of the sequence $N$ is first compared (on the left), followed by a comparison of the absolute difference between the heuristic and the solution (on the right).
    The behavior is similar to the one changepoint case.
    }
    \label{fig:2CP}
\end{figure}

\begin{figure}[h]
    \centering
    \begin{subfigure}[b]{0.495\textwidth}
        \centering
        \includegraphics[width=\textwidth]{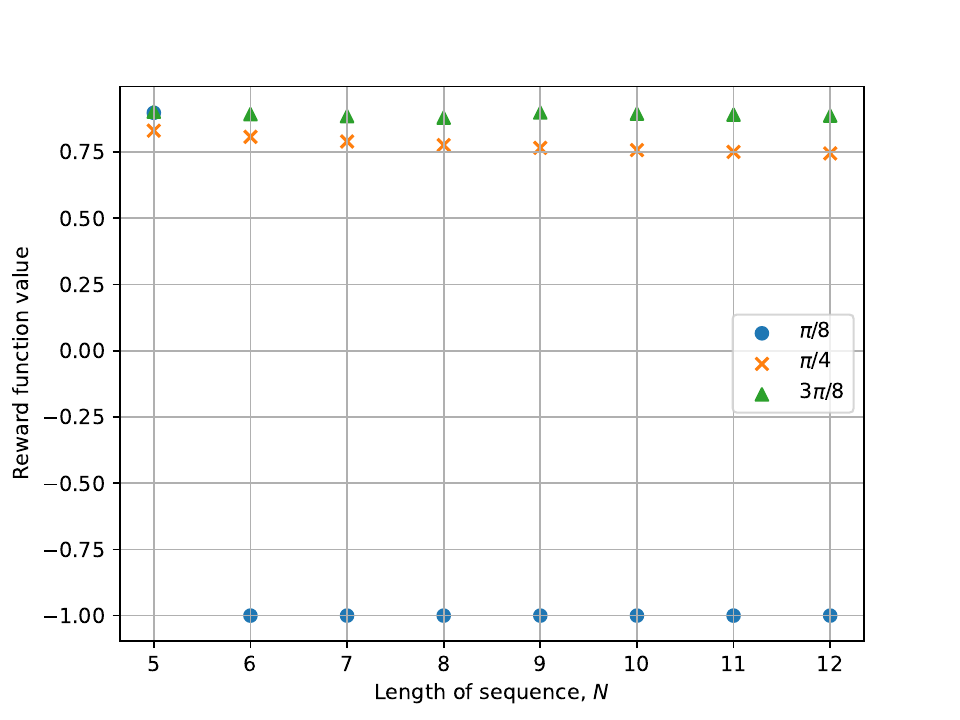}
        \caption{Optimal reward function value as a function of $N$ for $\theta \in \{\pi/8, \pi/4, 3\pi/8\}$.}
    \end{subfigure}
    \hfill
    \begin{subfigure}[b]{0.495\textwidth}
        \centering
        \includegraphics[width=\textwidth]{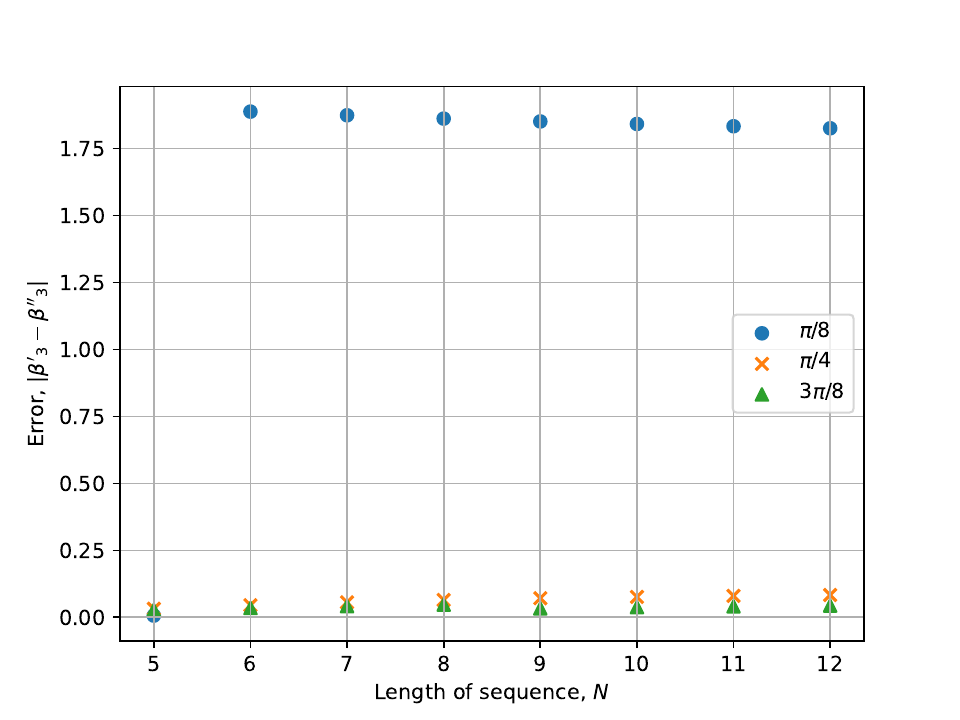}
        \caption{Absolute value of the difference between the solution and the heuristic for $\theta \in \{\pi/8, \pi/4, 3\pi/8\}$ as a function of $N$.}
    \end{subfigure}
    \caption{
    The final case is where there are three stages of mutations: 
    (1) from $\ket{0}$ to \\ $\ket{\phi_1} = \cos(\theta) + \sin(\theta)$, then
    (2) from $\ket{\phi_1} = \cos(\theta) + \sin(\theta)$ to $\ket{\phi_2} = \cos(2\theta) + \sin(2\theta)$, and finally
    (3) from $\ket{\phi_2} = \cos(2\theta) + \sin(2\theta)$ to $\ket{\phi_3} = \cos(3\theta) + \sin(3\theta)$, for $\theta \in \{\pi/8, \pi/4, 3\pi/8\}$.
    We contrast the optimal reward function values for varying $\theta$ as a function of the length of the sequence $N$ (on the left), and the absolute difference between the heuristic and the solution (on the right).
    Observe that when $\theta = \pi/8$, the SDP solver is unable to compute the solution and hence Eq.~\eqref{reduceddual} is assigned of $-1$ for sequences of length larger than $5$.
    More importantly, the heuristic manages to obtain a solution.
    This explains the large error for $\theta = \pi/8$ (on the right) which is more of an indication of the unsolvability than it is an actual error.
    } 
    \label{fig:3CP}
\end{figure}

\end{document}